\documentclass[11pt]{article}
\usepackage[pdftex]{graphicx}
\usepackage[T1]{fontenc}
\usepackage{lmodern}
\usepackage{fullpage}
\usepackage{amsmath,amsthm,amssymb}
\usepackage{verbatim}
\usepackage{mathtools}
\usepackage{algorithm}
\usepackage{algorithmicx}
\usepackage{algpseudocode}
\usepackage{subfig}
\usepackage[toc,page]{appendix}
\usepackage[usenames,dvipsnames]{color}
\usepackage{caption}
\usepackage{thmtools}
\usepackage{thm-restate}
\usepackage{hyperref}
\usepackage{cleveref}
\usepackage{tabularx}

\declaretheorem[name=Theorem,numberwithin=section]{thm}

\declaretheorem[name=Lemma,numberlike=thm]{lemma}

\declaretheorem[name=Definition,numberlike=thm,style=definition]{defn}

\makeatletter
\newcommand\footnoteref[1]{\protected@xdef\@thefnmark{\ref{#1}}\@footnotemark}
\makeatother

\DeclareMathOperator*{\argmax}{arg\,max}

\newcommand{\F}{\phi}

\newcommand{\R}{\mathbb{R}}
\newcommand{\N}{\mathbb{N}}
\newcommand{\bbP}{\mathbb{P}}

\newcommand{\Z}{\mathbb{Z}_{+}}

\newcommand{\simplex}{\Delta^{\bX}}

\newcommand{\probpml}{\bbP}

\newcommand{\bp}{\textbf{p}}

\newcommand{\bff}{\textbf{f}}

\newcommand{\bX}{\mathcal{D}}

\newcommand{\expo}[1]{\exp \left(#1 \right)}
\newcommand{\exps}[1]{\exp (#1 )}

\newcommand{\eqdef}{\stackrel{\mathrm{def}}{=}}
\newcommand{\defeq}{\eqdef}


\newcommand{\bgg}{\textbf{g}}

\newcommand{\bpml}{\bp^{\beta}_{\phis}}
\newcommand{\Phisn}{\Phi_{S}^{n}}

\newcommand{\bset}{\textbf{B}_{\fsub,S,\hat{\bff}}}
\newcommand{\probbp}[1]{\bp\left(#1\right)}
\newcommand{\probbpml}[1]{\bp^{\beta}_{\phis}\left(#1\right)}
\newcommand{\estiS}[1]{\hat{\bff}(#1)}
\newcommand{\Prob}[1]{\bbP\left(#1 \right)}
\newcommand{\xon}{x_1^n}
\newcommand{\xtn}{x_2^n}
\newcommand{\fsub}{\mathrm{F}}

\newcommand{\phis}{\phi_{S}}






\usepackage[utf8]{inputenc} 
\usepackage[T1]{fontenc}    
\usepackage{hyperref}       
\usepackage{url}            
\usepackage{booktabs}       
\usepackage{amsfonts}       
\usepackage{nicefrac}       
\usepackage{microtype}      
\usepackage{enumitem}       

\title{A General Framework for Symmetric Property Estimation}

%

\author{%
  Moses Charikar\\
Stanford University\\
\texttt{moses@cs.stanford.edu} \\
\and
Kirankumar Shiragur\\
Stanford University\\
\texttt{shiragur@stanford.edu} \\
\and
Aaron Sidford\\
Stanford University\\
\texttt{sidford@stanford.edu} \\
}

\begin{document}

\maketitle

\begin{abstract}
In this paper we provide a general framework for estimating symmetric properties of distributions from i.i.d. samples. For a broad class of symmetric properties we identify the  {\em easy} region where empirical estimation works and the {\em difficult} region where more complex estimators are required. We show that by approximately computing the profile maximum likelihood (PML) distribution \cite{ADOS16} in this difficult region we obtain a symmetric property estimation framework that is sample complexity optimal for many properties in a broader parameter regime than previous universal estimation approaches based on PML. The resulting algorithms based on these \emph{pseudo PML distributions} are also more practical.
\end{abstract}

\section{Introduction}
Symmetric property estimation is a fundamental and well studied problem in machine learning and statistics. In this problem, we are given $n$ i.i.d samples from an unknown distribution\footnote{Throughout the paper, distribution refers to discrete distribution.} $\bp$ and asked to estimate $\bff(\bp)$, where $\bff$ is a symmetric property (i.e. it does not depend on the labels of the symbols).
Over the past few years, the computational and sample complexities for estimating many symmetric properties have been extensively studied.
Estimators with optimal sample complexities have been obtained for several properties including entropy~\cite{VV11a, WY16, JVHW15}, distance to uniformity~\cite{VV11b, JHW16},  and support~\cite{VV11a, WY15}. 

All aforementioned estimators were property specific and therefore, a natural question is to design a universal estimator. 
In \cite{ADOS16}, the authors showed that the distribution that maximizes the profile likelihood, i.e. the likelihood of the multiset of frequencies of elements in the sample, referred to as \emph{profile maximum likelihood (PML) distribution}, can be used as a universal plug-in estimator. \cite{ADOS16} showed that computing the symmetric property on the PML distribution is sample complexity optimal in estimating support, support coverage, entropy and distance to uniformity within accuracy $\epsilon > \frac{1}{n^{0.2499}}$. Further, this also holds for distributions that approximately optimize the PML objective, where the approximation factor affects the desired accuracy.

Acharya et al. \cite{ADOS16} posed two important and natural open questions. The first was to give an efficient algorithm for finding an approximate PML distribution, which was recently resolved in \cite{CSS19}. 
The second open question is whether PML is sample competitive in all regimes of the accuracy parameter $\epsilon$? In this work, we make progress towards resolving this open question.

First, we show that the PML distribution based plug-in estimator achieves optimal sample complexity for all $\epsilon$ for the problem of estimating support size.
Next, we introduce a variation of the PML distribution that we call the \emph{pseudo PML distribution}.
Using this, we give a general framework for estimating a symmetric property.
For entropy and distance to uniformity, this pseudo PML based framework achieves optimal sample complexity for a broader regime of the accuracy parameter than was known for the vanilla PML distribution.

We provide a general framework that could, in principle be applied to estimate any separable symmetric property $\bff$, meaning $\bff(\bp)$ can be written in the form of $\sum_{x \in \bX}\bff(\bp_x)$. This motivation behind this framework is that for any symmetric property $\bff$ that is separable, the estimate for $\bff(\bp)$ can be split into two parts: $\bff(\bp)=\sum_{x \in B}\bff(\bp_x)+\sum_{x \in G}\bff(\bp_x)$, where $B$ and $G$ are a (property dependent) disjoint partition of the domain $\bX$. 
We refer to $G$ as the good set and $B$ as the bad set. Intuitively, $G$ is the subset of domain elements whose contribution to $\bff(\bp)$ is easy to estimate, i.e a simple estimator such as empirical estimate (with correction bias) works. 
For many symmetric properties, finding an appropriate partition of the domain is often easy.
Many estimators in the literature~\cite{JVHW15, JHW16, WY16} make such a distinction between domain elements.
The more interesting and difficult case is estimating the contribution of the bad set: $\sum_{x \in B}\bff(\bp_x)$.
Much of the work in these estimators is dedicated towards estimating this contribution
using sophisticated techniques such as polynomial approximation.
Our work gives a unified approach to estimating the contribution of the bad set. We propose a PML based estimator for estimating $\sum_{x \in B}\bff(\bp_x)$. 
We show that computing the PML distribution only on the set $B$ is sample competitive for entropy and distance to uniformity for almost all interesting parameter regimes thus (partially) handling the open problem proposed in \cite{ADOS16}. 
Additionally, requiring that the PML distribution be computed on a subset $B \subseteq \bX$ reduces the input size for the PML subroutine and results in practical algorithms (See \Cref{sec:exp}).

To summarize, the main contributions of our work are:
\vspace*{-0.1in}
\begin{itemize}
\item We make progress on an open problem of \cite{ADOS16} on broadening the range of error parameter $\epsilon$ that one can obtain for universal symmetric property estimation via PML.
\item We give a general framework for applying PML to new symmetric properties.
\item As a byproduct of our framework, we obtain more practical algorithms that invoke PML on smaller inputs (See \Cref{sec:exp}).
\end{itemize}

\subsection{Related Work}
For many natural properties, there has been extensive work on designing efficient estimators both with respect to computational time and sample complexity \cite{HJW17, HJM17, AOST14, RVZ17, ZVVKCSLSDM16, WY16a, RRSS07, WY15, OSW16, VV11a, WY16, JVHW15, JHW16, VV11b}.
We define and state the optimal sample complexity for estimating support, entropy and distance to uniformity. For entropy, we also discuss the regime in which the empirical distribution is sample optimal.

\noindent{\bf Entropy:} For any distribution $\bp \in \simplex$, the entropy $H(\bp)\defeq -\sum_{x\in \bX}\bp_{x} \log \bp_{x}$. For $\epsilon \geq  \frac{\log N}{N}$ (the interesting regime), where $N\defeq |\bX|$, the optimal sample complexity for estimating $H(\bp)$ within additive accuracy $\epsilon$ is $O(\frac{N}{\log N}\frac{1}{\epsilon})$~\cite{WY16}. Further if $\epsilon <  \frac{\log N}{N}$, then \cite{WY16} showed that empirical distribution is optimal.

\noindent{\bf Distance to uniformity:} For any distribution $\bp \in \simplex$, the distance to uniformity $\|\bp-u\|_1\defeq \sum_{x\in \bX}|\bp_{x}-\frac{1}{N}|$, where $u$ is the uniform distribution over $\bX$. The optimal sample complexity for estimating $\|\bp-u\|_1$ within additive accuracy $\epsilon$ is $O(\frac{N}{\log N}\frac{1}{\epsilon^2})$~\cite{VV11b, JHW16}.

\newcommand{\lprob}{k}
\noindent{\bf Support:} For any distribution $\bp \in \simplex$, the support of distribution $S(\bp)\defeq |\{x\in \bX~|~\bp_{x}>0 \}|$. Estimating support is difficult in general because we need sufficiently large number of samples to observe elements with small probability values. Suppose for all $x \in \bX$, if $\bp_{x} \in \{0\} \cup [\frac{1}{\lprob},1]$, then \cite{WY15} showed that the optimal sample complexity for estimating support within additive accuracy $\epsilon \lprob$ is $O(\frac{\lprob}{\log \lprob}\log^2 \frac{1}{\epsilon})$.

PML was introduced by Orlitsky et al.~\cite{OSSVZ04} in 2004. 
The connection between PML and universal estimators was first studied in~\cite{ADOS16}. 
As discussed in the introduction, PML based plug-in estimator applies to a restricted regime of error parameter $\epsilon$.
There have been several other approaches for designing universal estimators for symmetric properties. Valiant and Valiant~\cite{VV11a} adopted and rigorously analyzed a linear programming based approach for universal estimators proposed by \cite{ET76} and showed that it is sample complexity optimal in the constant error regime for estimating certain symmetric properties (namely, entropy and support size). Recent work of Han et al.~\cite{HJW18} applied a local moment matching based approach in designing efficient universal symmetric property estimators for a single distribution. \cite{HJW18} achieves the optimal sample complexity in restricted error regimes for estimating the power sum function, support and entropy.

Recently, \cite{YOSW18} gave a different unified approach to property estimation.
They devised an estimator that uses $n$ samples and achieves the performance attained by the empirical estimator with $n \sqrt{\log n}$ samples for a wide class of properties and for all underlying distributions. 
This result is further strengthened to $n \log n$ samples for Shannon entropy and a broad class of other properties including $\ell_1$-distance in \cite{HO19b}.

Independently of our work, authors in \cite{HO19a} propose \emph{truncated PML} that is slightly different but similar in the spirit to our idea of pseudo PML. They use the approach of truncated PML and study its application to symmetric properties such as: entropy, support and coverage; refer \cite{HO19a} for further details.
\subsection{Organization of the Paper}
In \Cref{sec:prelims} we provide basic notation and definitions. 
We present our general framework in \Cref{sec:results} and state all our main results. 
In \Cref{app:universal}, we provide proofs of the main results of our general framework. 
In \Cref{sec:appl}, we use these results to establish the sample complexity of our estimator in the case of entropy (See \Cref{sec:entr}) and distance to uniformity (See \Cref{sec:dtu}). Due to space constraints, many proofs are deferred to the appendix. In  \Cref{sec:exp}, we provide experimental results for estimating entropy using pseudo PML and other state-of-the-art estimators. Here we also demonstrate the practicality of our approach.

\newcommand{\setd}{\textbf{D}}
\newcommand{\eled}{d}
\newcommand{\phik}{\psi}
\newcommand{\phikf}{h}
\newcommand{\Phik}{\Psi}
\newcommand{\Phis}{\Phi_{S}}
\newcommand{\Phikn}{\Psi^{(k,n)}}
\newcommand{\Fk}{\psi}
\newcommand{\sk}{\textbf{S}_{\phik}}
\newcommand{\dfreq}[1]{\mathrm{Freq}\left(#1\right)}

\section{Preliminaries}\label{sec:prelims}
Let $[a]$ denote all integers in the interval $[1,a]$. Let $\simplex \subset [0,1]_{\R}^{\bX}$ be the set of all distributions supported on domain $\bX$ and let $N$ be the size of the domain. Throughout this paper we restrict our attention to discrete distributions and assume that we receive a sequence of $n$ independent samples from an underlying distribution $\bp \in \simplex$. Let $\bX ^n$ be the set of all length $n$ sequences and $y^n \in \bX^n$ be one such sequence with $y^n_{i}$ denoting its $i$th element. The probability of observing sequence $y^n$ is:
$$\bbP(\bp,y^n) \defeq \prod_{x \in \bX}\bp_x^{\bff(y^n,x)}$$
where $\bff(y^n,x)= |\{i\in [n] ~ | ~ y^n_i = x\}|$ is the frequency/multiplicity of symbol $x$ in sequence $y^n$ and $\bp_x$ is the probability of domain element $x\in \bX$. We next formally define profile, PML distribution and approximate PML distribution.

\begin{defn}[Profile]
For a sequence $y^n \in \bX^n$, its \emph{profile} denoted $\phi=\Phi(y^n) \in \Z^{n}$ is $\phi\defeq(\F(j))_{j \in [n]} $ where $\F(j)\defeq|\{x\in \bX |\bff(y^n,x)=j \}|$ is the number of domain elements with frequency ${j}$ in $y^{n}$. We call $n$ the length of profile $\F$ and use $\Phi^n$ denote the set of all profiles of length $n$.
\footnote{The profile does not contain $\F(0)$, the number of unseen domain elements.}
\end{defn}

For any distribution $\bp \in \simplex$, the probability of a profile $\phi \in \Phi^n$ is defined as:
\begin{equation}\label{eqpml1}
\probpml(\bp,\phi)\defeq\sum_{\{y^n \in \bX^n~|~ \Phi (y^n)=\phi \}} \bbP(\bp,y^n)\\
\end{equation}

The distribution that maximizes the probability of a profile $\phi$ is the profile maximum likelihood distribution and we formally define it next.
\begin{defn}[Profile maximum likelihood distribution] For any profile $\phi \in \Phi^{n}$, a \emph{Profile Maximum Likelihood} (PML) distribution $\bp_{pml,\phi} \in \simplex$ is:
 $\bp_{pml,\phi} \in \argmax_{\bp \in \simplex} \probpml(\bp,\phi)$ and $\probpml(\bp_{pml,\phi},\phi)$ is the maximum PML objective value. Further, a distribution $\bp^{\beta}_{pml,\phi} \in \simplex$ is a $\beta$-\emph{approximate PML} distribution if $\probpml(\bp^{\beta}_{pml,\phi},\phi)\geq \beta \cdot \probpml(\bp_{pml,\phi},\phi)$.
\end{defn}

We next provide formal definitions for separable symmetric property and an estimator.
\begin{defn}[Separable Symmetric Property] A symmetric property $\bff:\simplex \rightarrow \R$ is separable if for any $\bp \in \simplex$, $f(\bp)\defeq\sum_{x\in \bX}\bgg(\bp_{x})$, for some function $\bgg:\R \rightarrow \R$. Further for any subset $S\subset \bX$, we define $f_{S}(\bp)\defeq\sum_{x\in S}\bgg(\bp_{x})$.
	\end{defn}

\begin{defn}
	A property estimator is a function $\hat{\bff}:\bX^n \rightarrow \R$, that takes as input $n$ samples and returns the estimated property value. The sample complexity of $\hat{\bff}$ for estimating a symmetric property $\bff(\bp)$ is the number of samples needed to estimate $\bff$ up to accuracy $\epsilon$ and with constant probability. The optimal sample complexity of a property $\bff$ is the minimum number of samples of any estimator.
	\end{defn} 
\newcommand{\co}{c_{1}}
\newcommand{\ct}{2c_{1}}
\newcommand{\fsubp}{\mathrm{F'}}
\newcommand{\consto}{2c_{1}}
\newcommand{\constt}{c_{5}}
\newcommand{\gset}{\mathrm{G}}
\newcommand{\bpmlp}{\bp_{\phi_{S'}}^{\beta}}
\newcommand{\phisp}{\phi_{S'}}
\newcommand{\rvS}{\textbf{S}}

\section{Main Results}\label{sec:results}
As discussed in the introduction, one of our motivations was to provide a better analysis for the PML distribution based plug-in estimator. In this direction, we first show that the PML distribution is sample complexity optimal in estimating support in all parameter regimes. Estimating support is difficult in general and all previous works make the assumption that the minimum non-zero probability value of the distribution is at least $\frac{1}{k}$. In our next result, we show that the PML distribution under this constraint is sample complexity optimal for estimating support.
\begin{thm}\label{thm:supp}
	The PML distribution \footnote{\label{suppfoot} Under the constraint that its minimum non-zero probability value is at least $\frac{1}{k}$. This assumption is also necessary for the results in \cite{ADOS16} to hold.} based plug-in estimator is sample complexity optimal in estimating support for all regimes of error parameter $\epsilon$.
\end{thm}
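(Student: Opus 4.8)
The plan is to split on the accuracy parameter $\epsilon$: for $\epsilon$ above a fixed inverse power of $k$ the statement essentially reduces to \cite{ADOS16}, and the genuinely new content is the small-$\epsilon$ (near--exact--recovery) regime, which I would handle by a direct structural argument about the minimum-probability--constrained PML distribution rather than through the generic competitive lemma of \cite{ADOS16}. That lemma loses a multiplicative factor $|\Phi^n| = e^{\Theta(\sqrt n)}$, so it only certifies accuracy when some profile-based estimator fails with probability $e^{-\omega(\sqrt n)}$; but the relevant sample budget for support is $n = \Theta(k\log k)$, at which the natural distinct-count estimator fails only with probability $\mathrm{poly}(1/k) = e^{-\Theta(\log k)} \gg e^{-\Theta(\sqrt{k \log k})}$, so the competitive lemma is useless in this regime. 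Recall that under $\bp_x \in \{0\} \cup [1/k,1]$ the optimal sample complexity of $\epsilon k$-accurate support estimation is $\Theta(\tfrac{k}{\log k}\log^2\tfrac1\epsilon)$ for $\epsilon \gtrsim 1/k$ and saturates at $\Theta(k\log k)$ below that (the cost of exact recovery, by a coupon-collector lower bound). For $\epsilon \ge k^{-1/5}$ I would run PML on $n = \Theta(\tfrac{k}{\log k}\log^2\tfrac1\epsilon)$ samples, for which $\epsilon > n^{-0.2499}$, so the claim follows from \cite{ADOS16}; it then remains to treat $\epsilon \le k^{-1/5}$, where the optimum is $\Theta(k\log k)$, so I may use $n = Ck\log k$ samples for a large constant $C$ and remain within a constant factor of optimal.

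For $\epsilon \le k^{-1/5}$ I claim the PML plug-in estimator recovers $S(\bp)$ \emph{exactly} with constant probability, hence within $\epsilon k$ for every $\epsilon$. There are two ingredients. First, concentration of the number $D$ of distinct symbols observed: $\E[S(\bp) - D] = \sum_{x : \bp_x > 0}(1-\bp_x)^n \le k(1-\tfrac1k)^n \le k^{1-C}$, so by Markov $D = S(\bp)$ except with probability $k^{1-C}$. Second, a structural lemma: for any profile $\phi$ of $n = Ck\log k$ samples, which has $D$ distinct symbols, every distribution with minimum nonzero probability $\ge 1/k$ and support strictly larger than $D$ is strictly suboptimal for $\probpml(\cdot,\phi)$; since a distribution of support less than $D$ assigns $\phi$ probability $0$, any PML distribution over the constrained class has support \emph{exactly} $D$, i.e.\ $S(\bp_{pml,\phi}) = D$. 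Combining, $S(\bp_{pml,\phi}) = D = S(\bp)$ with probability $\ge 1 - k^{1-C} \ge \tfrac23$.

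The heart is the structural lemma. Fix a distribution $\bq$ with support $s > D$ and all nonzero probabilities $\ge 1/k$. A sequence with profile $\phi$ drawn from $\bq$ leaves exactly $s-D$ of the $s$ support elements unseen, and conditioning on which set $U$ of size $s-D$ is unseen makes the remaining symbols i.i.d.\ from the distribution $\bq_{-U}$ obtained by deleting the elements of $U$ from $\bq$ and renormalizing; this yields the identity
\[
\probpml(\bq,\phi) \;=\; \sum_{\substack{U \subseteq \mathrm{supp}(\bq)\\ |U| = s-D}} \bigl(1-\bq(U)\bigr)^{n}\, \probpml\bigl(\bq_{-U},\phi\bigr).
\]
Each $\bq_{-U}$ has support exactly $D$ and minimum probability still $\ge 1/k$, so $\probpml(\bq_{-U},\phi) \le M_D$, where $M_D$ denotes the maximum of $\probpml(\cdot,\phi)$ over support-$D$ distributions in the constrained class (and $M_D > 0$, since the uniform distribution on any $D$ of the observed symbols realizes $\phi$ with positive probability). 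Moreover $\bq(U) \ge |U|/k$, so $(1-\bq(U))^n \le e^{-(s-D)n/k} = (k^{-C})^{s-D}$, while the number of admissible $U$ is $\binom{s}{s-D} \le (ek)^{s-D}$ because $s \le k$. Hence $\probpml(\bq,\phi) \le (ek^{1-C})^{s-D} M_D < M_D$ once $C \ge 2$ and $k$ is large, which proves the lemma.

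The step I expect to be the main obstacle is this structural lemma together with its bookkeeping: verifying the conditioning identity, and --- crucially --- checking that the geometric suppression $(k^{-C})^{s-D}$ genuinely dominates the combinatorial count $\binom{s}{s-D}$ of unseen subsets \emph{uniformly} over $s \in \{D+1,\dots,k\}$. One must also confirm that the two regimes glue at $\epsilon = k^{-1/5}$ with enough constant slack (so that $n^{-0.2499} < \epsilon$ there and $Ck\log k = \Theta(\mathrm{opt})$ below it). Finally, the exact-support argument only tolerates a PML-approximation factor that is polynomially small in $k$, so the clean statement is for the exact PML distribution, matching the scope of \cite{ADOS16}.
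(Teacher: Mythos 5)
Your proposal is correct and follows essentially the same route as the paper's proof: split at $\epsilon \approx k^{-\Theta(1)}$ and cite \cite{ADOS16} above it, and below it take $n=\Theta(k\log k)$ samples and show the min-probability-constrained PML distribution has support exactly equal to the number of distinct observed symbols, by decomposing $\probpml(\bq,\phi)$ over size-$D$ support subsets, comparing each renormalized restriction to the best support-$D$ distribution, and noting that the mass-deficiency factor $e^{-n(s-D)/k}$ beats the count $\binom{s}{s-D}$, followed by the same coupon-collector concentration giving $D=S(\bp)$ with high probability. The only differences are cosmetic (your lemma is phrased uniformly over all $\bq$ with support $s>D$ rather than as a contradiction for the PML maximizer, and your regime threshold is $k^{-1/5}$ instead of $k^{-0.2499}$), and your bookkeeping matches the paper's.
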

For support, we show that an approximate PML distribution is sample complexity optimal as well.
\begin{thm}\label{thm:approxsupp}
	For any constant $\alpha>0$, an $\exps{-\epsilon^2 n^{1-\alpha}}$-approximate PML distribution \footnoteref{suppfoot} based plug-in estimator is sample complexity optimal in estimating support for all regimes of error $\epsilon$.
\end{thm}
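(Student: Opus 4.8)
\medskip\noindent\textbf{Proof plan.}
The plan is to bootstrap from \Cref{thm:supp}, tracking the slack its proof leaves. The only property of a $\beta$-approximate PML distribution we use is $\probpml(\bp^{\beta}_{pml,\phi},\phi)\ge\beta\,\probpml(\bp_{pml,\phi},\phi)$. Hence it suffices to show: if $\bp$ is the true distribution (min.\ non-zero mass $\ge 1/k$), $\phi$ is the profile of $n$ i.i.d.\ samples with $n$ a large constant times the optimal sample complexity $n^\ast$, and $\bq$ has min.\ non-zero mass $\ge 1/k$ with $|S(\bq)-S(\bp)|>2\epsilon k$, then $\probpml(\bq,\phi)<\beta\,\probpml(\bp_{pml,\phi},\phi)$ with constant probability over $\phi$; for then no such $\bq$ is a $\beta$-approximate PML distribution, so $S(\bp^{\beta}_{pml,\phi})$ estimates $S(\bp)$ within $2\epsilon k$. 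For $\beta=1$ this is what the proof of \Cref{thm:supp} already delivers, so the task is to upgrade ``$<1$'' to ``$<\exp(-\epsilon^2 n^{1-\alpha})$'', which I would do along the same two-regime case split.

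In the ``large $\epsilon$'' regime ($\epsilon\gtrsim n^{-1/4+o(1)}$), \Cref{thm:supp} argues via a good estimator: the Wu--Yang fingerprint estimator \cite{WY15} is a fixed linear form in the profile with poly-logarithmic coefficients, so McDiarmid's inequality gives error probability $\delta=\exp(-\Omega(\epsilon^2 k^2/(n\,\mathrm{polylog})))=\exp(-\Omega(\epsilon^2 k\log k/\mathrm{polylog}))$ with $\Theta(n)$ samples. Re-running the Acharya et al.\ \cite{ADOS16} comparison between an estimator and the $\beta$-approximate PML plug-in yields error probability at most $|\Phi^n|\,\delta/\beta\le\exp(O(\sqrt n)+\epsilon^2 n^{1-\alpha}-\Omega(\epsilon^2 k\log k/\mathrm{polylog}))$; since $k$ and $n$ are polynomially related, $\epsilon^2 n^{1-\alpha}$ is smaller than $\epsilon^2 k\log k/\mathrm{polylog}$ by an $\widetilde\Omega(k^{\alpha})$ factor for any fixed $\alpha>0$, and $\sqrt n$ is dominated here, so this is $o(1)$. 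In the complementary ``small $\epsilon$'' regime, $n^\ast=\Omega(k\log k)$, so $n=\Theta(n^\ast)$ samples (large constant) observe every element of $\mathrm{supp}(\bp)$ with constant probability by a coupon-collector bound; on such $\phi$, $D(\phi)=S(\bp)$, no distribution on fewer than $D(\phi)$ symbols realizes $\phi$, and a Stirling estimate shows the PML objective at support size $S(\bp)+t$ is below its value at $S(\bp)$ by a factor $\ge(2+S(\bp))^{\Omega(t)}$. Thus every $\beta$-approximate PML distribution has support within $O(\log(1/\beta))=O(\epsilon^2 n^{1-\alpha})$ of $S(\bp)$, and one checks $\epsilon^2 n^{1-\alpha}=o(\epsilon k)$ (crudely $\epsilon n^{1-\alpha}=O(k^{1-\alpha})=o(k)$, using $\epsilon\,(\log\tfrac1\epsilon)^{2}=O(1)$ and $n^\ast=O(k(\log\tfrac1\epsilon)^{2}/\log k)$), so this support lies within $2\epsilon k$ of $S(\bp)$.

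The step I expect to be the main obstacle is the bookkeeping that glues the two regimes into a cover of all $\epsilon$: the constant in $n=\Theta(n^\ast)$ must be taken large enough that the ``observe every element'' argument handles all $\epsilon\le k^{-c}$ for some fixed $c<1/4$, overlapping the range $\epsilon\gtrsim n^{-1/4+o(1)}$ where the Wu--Yang concentration beats $|\Phi^n|$, so that no band of $\epsilon$ is missed; and one must track through both regimes that the extra $\exp(-\epsilon^2 n^{1-\alpha})$ factor is genuinely absorbed. The reason the relevant quantity is $\epsilon^2 n^{1-\alpha}$ with a positive constant $\alpha$ (rather than, say, $\epsilon^2 n$ or a constant) is precisely that $k=n^{1-o(1)}$, so $\epsilon^2 n^{1-\alpha}$ is always smaller by a polynomial $n^{\Omega(\alpha)}$ factor than the margin \Cref{thm:supp} controls, hence absorbable into the constant in front of the optimal sample complexity.
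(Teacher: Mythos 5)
Your plan is correct and matches the paper's proof in essentially every respect: split at $\epsilon\approx k^{-1/4}$, handle the large-$\epsilon$ regime by the \cite{ADOS16}-style comparison (which the paper simply cites rather than re-deriving via McDiarmid and the $e^{O(\sqrt n)}$ profile count as you do), and for small $\epsilon$ take $n=\Theta(k\log k)$, use coupon collection so that $\mathrm{Distinct}(\phi)=S(\bp)$ with high probability, and show that any distribution with minimum mass $\ge 1/k$ whose support exceeds the number of distinct observed symbols by $t$ loses a factor $\exp\bigl(\Omega(t\log k)\bigr)$ in the PML objective (your ``avoid the extra mass'' plus binomial-coefficient count is exactly the paper's bound $\binom{S}{\mathrm{Distinct}(\phi)}(1-t/k)^n$), which forces a $\beta$-approximate PML with $\beta=\exp(-\epsilon^2 n^{1-\alpha})$ to have support within $o(\epsilon k)$ of $S(\bp)$. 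The only cosmetic difference is bookkeeping: you bound the excess $t=O(\log(1/\beta))$ directly, while the paper assumes $t>\epsilon k$ and derives a contradiction with the approximation guarantee.
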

We defer the proof of both these theorems to \Cref{sec:support}.

For entropy and distance to uniformity, we study a variation of the PML distribution we call the pseudo PML distribution and present a general framework for symmetric property estimation based on this. We show that this pseudo PML based general approach gives an estimator that is sample complexity optimal for estimating entropy and distance to uniformity in broader parameter regimes. To motivate and understand this general framework we first define new generalizations of the profile, PML and approximate PML distributions.
\newcommand{\Fsj}{\phis(j)}
\begin{defn}[$S$-pseudo Profile]
	For any sequence $y^{n} \in \bX^{n}$ and $S \subseteq \bX$, its $S$-\emph{pseudo} profile denoted $\phis=\Phis(y^n)$ is $\phi\defeq(\Fsj)_{j\in[n]} $ where $\Fsj\defeq|\{x\in S ~|~\bff(y^n,x)=j \}|$ is the number of domain elements in $S$ with frequency ${j}$ in $y^{n}$. We call $n$ the length of $\phis$ as it represents the length of the sequence $y^n$ from which this pseudo profile was constructed. Let $\Phisn$ denote the set of all $S$-pseudo profiles of length $n$.
\end{defn}

For any distribution $\bp \in \simplex$, the probability of a $S$-pseudo profile $\phis \in \Phisn$ is defined as:
\begin{equation}\label{eqpml1}
\probpml(\bp,\phis)\defeq\sum_{\{y^n \in \bX^n~|~ \Phis(y^n)=\phis \}} \bbP(\bp,y^n)\\
\end{equation}
We next define the $S$-pseudo PML and $(\beta,S)$-approximate pseudo PML distributions that are analogous to the PML and approximate PML distributions.
\begin{defn}[$S$-pseudo PML distribution]
	For any $S$-pseudo profile $\phis \in \Phisn$, a distribution $\bp_{\phis} \in \simplex$ is a $S$-\emph{pseudo PML} distribution if $\bp_{\phis} \in \argmax_{\bp \in \simplex} \probpml(\bp,\phis)$.
\end{defn}

\begin{defn}[$(\beta,S)$-approximate pseudo PML distribution]
	For any profile $\phis \in \Phisn$, a distribution $\bpml \in \simplex$ is a $(\beta,S)$-\emph{approximate pseudo PML} distribution if $\probpml(\bpml,\phis)\geq \beta \cdot \probpml(\bpml,\phis)$.
\end{defn}
For notational convenience, we also define the following function.
\begin{defn}
	For any subset $S \subseteq \bX$, the function $\mathrm{Freq}:\Phisn \rightarrow 2^{\Z}$ takes input a $S$-psuedo profile and returns the set with all distinct frequencies in $\phis$.
\end{defn}

Using the definitions above, we next give an interesting generalization of Theorem 3 in \cite{ADOS16}.
\begin{thm}\label{thmbeta}
	For a symmetric property $\bff$ and $S \subseteq \bX$, suppose there is an estimator $\hat{\bff}:\Phisn \rightarrow \R$, such that for any $\bp$ and $\phis \sim \bp$ the following holds, $$\Prob{|\bff_{{S}}(\bp)-\hat{\bff}(\phis)|\geq \epsilon} \leq \delta~,$$ then for any $\fsub \in 2^{\Z}$, a $(\beta,S)$-approximate pseudo PML distribution $\bpml$ satisfies: $$\Prob{|\bff_{{S}}(\bp)-\bff_{{S}}(\bpml)| \geq 2 \epsilon} \leq \frac{\delta n^{|\fsub|}}{\beta} \Prob{\dfreq{\phis} \subseteq \fsub} +\Prob{\dfreq{\phis} \not\subseteq \fsub}~.$$
\end{thm}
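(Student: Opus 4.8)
The plan is to mimic the proof of Theorem 3 in \cite{ADOS16}, adapted to the $S$-pseudo profile setting. The key structural fact is that for a fixed profile $\phis$, the $S$-pseudo PML objective $\probpml(\cdot,\phis)$ is a sum over the at most $n^{|\fsub|}$ pseudo profiles that are ``consistent'' with $\fsub$, i.e.\ whose frequency set is contained in $\fsub$, plus the contribution of profiles whose frequency set escapes $\fsub$. So first I would condition on the event $\mathcal{E}\defeq\{\dfreq{\phis}\subseteq\fsub\}$; outside $\mathcal{E}$ we pay the additive term $\Prob{\dfreq{\phis}\not\subseteq\fsub}$ and make no further claim, which matches the last summand in the bound.

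Next, restricted to $\mathcal E$, I would run the standard PML ``few-profiles'' argument. Let $\bpml$ be the $(\beta,S)$-approximate pseudo PML distribution. Call a pseudo profile $\phis$ \emph{bad} (for $\bpml$) if $|\bff_S(\bpml)-\hat{\bff}(\phis)|\ge\epsilon$ \emph{and} $\dfreq{\phis}\subseteq\fsub$. The core inequality is
\[
\probpml(\bpml,\phis) \;\le\; \frac{1}{\beta}\,\probpml(\bp_{\phis},\phis) \;\le\; \frac{1}{\beta}\,\probpml(\bp,\phis),
\]
since the true PML value only dominates. Summing the probability that a bad profile is drawn from $\bpml$: because there are at most $n^{|\fsub|}$ profiles with frequency set inside $\fsub$ (each distinct frequency in $[n]$, $|\fsub|$ of them), the total $\bpml$-mass on bad profiles is at most $n^{|\fsub|}$ times the max single-profile probability. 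For each such bad profile, by the estimator's correctness assumption applied with the \emph{true} distribution $\bp$, the $\bp$-probability of drawing it is small; more precisely I would argue $\probpml(\bp,\phis_{\mathrm{bad}})$ is controlled by $\delta$ via $\Prob{\dfreq{\phis}\subseteq\fsub \text{ and } \phis \text{ bad}}\le\Prob{|\bff_S(\bp)-\hat\bff(\phis)|\ge\epsilon}\le\delta$ — here one uses that if $\bpml$ and $\bp$ disagree on a profile by more than $\epsilon$ each relative to $\hat\bff(\phis)$, then one of them is $\epsilon$-far, and the estimator guarantee pins down the $\bp$ side. Combining, $\Prob{\phis\sim\bpml \text{ is bad}}\le \delta n^{|\fsub|}/\beta$.

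Finally I would assemble: if $\phis\sim\bp$ lands in $\mathcal E$ and is not bad for $\bpml$, then $|\bff_S(\bpml)-\hat\bff(\phis)|<\epsilon$; and since $\phis\sim\bp$, the estimator guarantee gives $|\bff_S(\bp)-\hat\bff(\phis)|<\epsilon$ except with probability $\delta$; triangle inequality then yields $|\bff_S(\bp)-\bff_S(\bpml)|<2\epsilon$. Taking a union bound over the failure modes — $\phis\notin\mathcal E$, the estimator failing on $\bp$, and $\phis$ being bad for $\bpml$ (whose probability under $\bp$ conditioned on $\mathcal E$ is what the $\delta n^{|\fsub|}/\beta$ term with the $\Prob{\dfreq{\phis}\subseteq\fsub}$ factor encodes) — gives exactly the stated bound. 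The main obstacle I anticipate is the bookkeeping that makes the factor $\Prob{\dfreq{\phis}\subseteq\fsub}$ appear (rather than a bare $1$) on the first term: this requires carefully tracking that the ``bad profile'' contribution is only summed over profiles inside $\fsub$ and relating the $\bpml$-mass there back to the $\bp$-mass of the event $\mathcal E$, which is where the one subtle step of the argument lives. I would also double-check the counting bound $|\{\phis : \dfreq{\phis}\subseteq\fsub\}|\le n^{|\fsub|}$, since a pseudo profile of length $n$ on frequency set $\fsub$ is determined by the count (in $[0,n]$) at each of the $|\fsub|$ frequencies, giving $(n+1)^{|\fsub|}$, which is absorbed into $n^{|\fsub|}$ up to the usual slack.
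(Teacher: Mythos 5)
Your outer skeleton---splitting on the event $\dfreq{\phis}\subseteq\fsub$, the counting bound $(n+1)^{|\fsub|}$ for pseudo profiles supported on $\fsub$, and a final triangle inequality---matches the paper, but the core step is broken in two related places. First, your chain $\probpml(\bpml,\phis)\le\frac{1}{\beta}\probpml(\bp_{\phis},\phis)\le\frac{1}{\beta}\probpml(\bp,\phis)$ has the domination backwards: the pseudo PML distribution \emph{maximizes} $\probpml(\cdot,\phis)$, so $\probpml(\bp_{\phis},\phis)\ge\probpml(\bp,\phis)$, and what the argument actually needs is the transfer in the other direction, namely $\probbpml{\phis}\ge\beta\,\probpml(\bp_{\phis},\phis)\ge\beta\,\probbp{\phis}$, i.e.\ any profile whose probability under the true $\bp$ exceeds $\delta/\beta$ also has probability exceeding $\delta$ under $\bpml$. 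Second, and this is the genuinely missing idea, the estimator hypothesis must be invoked for the distribution $\bpml$ itself, not only ``with the true distribution $\bp$''. Your claimed containment $\Prob{\dfreq{\phis}\subseteq\fsub \text{ and } \phis \text{ bad}}\le\Prob{|\bff_{S}(\bp)-\hat{\bff}(\phis)|\ge\epsilon}$ is false: you defined ``bad'' by $|\bff_{S}(\bpml)-\hat{\bff}(\phis)|\ge\epsilon$, and accuracy of $\hat{\bff}$ relative to $\bff_{S}(\bp)$ gives no control on its distance to $\bff_{S}(\bpml)$; saying the guarantee ``pins down the $\bp$ side'' leaves exactly the side you need unpinned. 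Relatedly, bounding the mass of bad profiles under $\phis\sim\bpml$ is the wrong measure: the theorem's probability is over $\phis\sim\bp$, and your transfer between the two measures is never justified.

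The paper's proof handles both points with a per-profile argument. Fix $\phis$ with $\probbp{\phis}>\delta/\beta$. Since $\beta\le1$, $\probbp{\phis}>\delta$, so this single profile cannot lie in the failure set (of $\bp$-mass at most $\delta$) of the hypothesis applied to $\bp$, hence $|\bff_{S}(\bp)-\hat{\bff}(\phis)|\le\epsilon$. By the correct transfer inequality, $\probbpml{\phis}>\delta$, so applying the hypothesis to the fixed distribution $\bpml$ (legitimate because the hypothesis holds for every distribution; although $\bpml$ depends on the observed profile, for each fixed $\phis$ it is a fixed distribution and one only queries the probability of that one profile under it) gives $|\bff_{S}(\bpml)-\hat{\bff}(\phis)|\le\epsilon$, and the triangle inequality yields $|\bff_{S}(\bp)-\bff_{S}(\bpml)|\le2\epsilon$. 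Taking the contrapositive, every $\phis$ with $\dfreq{\phis}\subseteq\fsub$ and $|\bff_{S}(\bp)-\bff_{S}(\bpml)|\ge2\epsilon$ satisfies $\probbp{\phis}\le\delta/\beta$, and \Cref{lem:bcard} bounds the number of such profiles by $(n+1)^{|\fsub|}$, giving the first term of the bound. Your worry about manufacturing the factor $\Prob{\dfreq{\phis}\subseteq\fsub}$ is not where the difficulty lies (the paper's own proof simply bounds that term by $\frac{\delta}{\beta}(n+1)^{|\fsub|}$); the difficulty is precisely the two points above, and without them your bound on the bad-profile mass does not go through.
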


Note that in the theorem above, the error probability with respect to a pseudo PML distribution based estimator has dependency on$\frac{\delta n^{|\fsub|}}{\beta}$ and $\Prob{\dfreq{\phis} \not\subseteq \fsub}$. However Theorem 3 in \cite{ADOS16} has error probability $\frac{\delta e^{\sqrt{n}}}{\beta}$. This is the bottleneck in showing that PML works for all parameter regimes and the place where pseudo PML wins over the vanilla PML based estimator, getting non-trivial results for entropy and distance to uniformity. We next state our general framework for estimating symmetric properties.
We use the idea of sample splitting which is now standard in the literature \cite{WY16,JVHW15,JHW16,CL11,Nem03}.
\begin{algorithm}[H]
	\caption{General Framework for Symmetric Property Estimation}\label{algpml}
	\begin{algorithmic}[1]
		\Procedure{Property estimation}{$x^{2n},\bff, \fsub$}
		\State Let $x^{2n}=(\xon,\xtn)$, where $\xon$ and $\xtn$ represent first and last $n$ samples of $x^{2n}$ respectively.
		\State Define $S\defeq \{y \in \bX~|~f(\xon,y) \in \fsub \}$.
		\State Construct profile $\phis$, where $\Fsj\defeq|\{y\in S ~|~\bff(\xtn,y)=j \}|$.
		\State Find a $(\beta,S)$-approximate pseudo PML distribution $\bpml$ and empirical distribution $\hat{\bp}$ on $\xtn$.
		\State \textbf{return} $\bff_{S}(\bpml)+\bff_{\bar{S}}(\hat{\bp}) + \text{correction bias with respect to }\bff_{\bar{S}}(\hat{\bp})$.
		\EndProcedure
	\end{algorithmic}
\end{algorithm}
In the above general framework, the choice of $\fsub$ depends on the symmetric property of interest. Later, in the case of entropy and distance to uniformity, we will choose $\fsub$ to be the region where the empirical estimate fails; it is also the region that is difficult to estimate. One of the important properties of the above general framework is that $\bff_{S}(\bpml)$ (recall $\bpml$ is a $(\beta,S)$-approximate pseudo PML distribution and $\bff_{S}(\bpml)$ is the property value of distribution $\bpml$ on subset of domain elements $S \subseteq \bX$) is close to $\bff_{S}(\bp)$ with high probability. Below we state this result formally.
\begin{thm}\label{thm:main}
	For any symmetric property $\bff$, let $\gset \subseteq {\bX}$ and $\fsub,\fsubp \in 2^{\Z}$. If for all $S' \in 2^\gset$, there exists an estimator $\hat{\bff}:\Phi_{S'}^{n} \rightarrow \R$, such that for any $\bp$ and $\phisp \sim \bp$ satisfies, 
	\begin{equation}\label{eq:condio}
	\Prob{|\bff_{{S'}}(\bp)-\hat{\bff}(\phisp)|\geq2 \epsilon} \leq \delta \text{ and }\Prob{\dfreq{\phisp} \not\subseteq \fsub'} \leq \gamma~.
	\end{equation}
	Then for any sequence $x^{2n}=(\xon,\xtn)$, $$\Prob{|\bff_{S}(\bp)-\bff_{S}(\bpml)|>4\epsilon}\leq \frac{\delta n^{|F'|}}{\beta}+\gamma + \Prob{S \notin 2^\gset}~,$$ where $S$ is a random set $\rvS\defeq \{y \in \bX~|~f(\xon,y) \in \fsub \}$ and $\phis\defeq \Phis(\xtn)$.
\end{thm}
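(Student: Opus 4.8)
The plan is to reduce \Cref{thm:main} to \Cref{thmbeta} by conditioning on the first batch of samples $\xon$, which is the only randomness determining the set $S=\rvS$. Since $x^{2n}=(\xon,\xtn)$ consists of $2n$ i.i.d.\ draws from $\bp$, the halves $\xon$ and $\xtn$ are independent; hence, after conditioning on $\xon$ (equivalently, after fixing the value of $S$), the pseudo profile $\phis=\Phis(\xtn)$ has exactly the law of $\phisp\sim\bp$ with $S'=S$ appearing in hypothesis \eqref{eq:condio}, and the distribution $\bpml$ from \Cref{algpml} is a $(\beta,S)$-approximate pseudo PML distribution for this (conditionally random) profile. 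This lets me invoke \Cref{thmbeta} for each fixed $S$.

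On the event $\{S\in 2^\gset\}$, hypothesis \eqref{eq:condio} applied with $S'=S$ supplies an estimator $\hat{\bff}\colon\Phi_S^n\to\R$ with $\Prob{|\bff_S(\bp)-\hat{\bff}(\phis)|\ge 2\epsilon \mid S}\le\delta$. Applying \Cref{thmbeta} with this estimator, with its free frequency set chosen to be $\fsubp$, and noting that \Cref{thmbeta}'s hypothesis at accuracy ``$\epsilon$'' is met here at accuracy ``$2\epsilon$'' (so its conclusion degrades to ``$4\epsilon$''), I obtain
\[
\Prob{|\bff_S(\bp)-\bff_S(\bpml)|\ge 4\epsilon \mid S}\;\le\;\frac{\delta\,n^{|\fsubp|}}{\beta}\,\Prob{\dfreq{\phis}\subseteq\fsubp\mid S}\;+\;\Prob{\dfreq{\phis}\not\subseteq\fsubp\mid S}.
\]
Bounding the first probability by $1$ and the second by $\gamma$ (the second part of \eqref{eq:condio} with $S'=S$), the right-hand side is at most $\frac{\delta n^{|\fsubp|}}{\beta}+\gamma$ for every fixed $S\in 2^\gset$.

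It then remains to remove the conditioning by the law of total probability, splitting on whether $S\in 2^\gset$:
\[
\Prob{|\bff_S(\bp)-\bff_S(\bpml)|>4\epsilon}\;\le\;\E_{\xon}\!\left[\mathbf{1}[S\in 2^\gset]\Bigl(\tfrac{\delta n^{|\fsubp|}}{\beta}+\gamma\Bigr)\right]+\Prob{S\notin 2^\gset}\;\le\;\frac{\delta n^{|\fsubp|}}{\beta}+\gamma+\Prob{S\notin 2^\gset},
\]
which is exactly the claimed bound.

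The main obstacle is not any single computation but carefully justifying the conditioning step: one must verify that ``$\phisp\sim\bp$'' in \eqref{eq:condio} genuinely matches the conditional law of $\Phis(\xtn)$ given $S$ (using independence of the two sample halves), that taking $S'=S$ is legitimate because \eqref{eq:condio} is quantified over all $S'\in 2^\gset$, and that $\bpml$ as produced by \Cref{algpml} qualifies as the object to which \Cref{thmbeta} applies for each fixed $S$. A secondary bookkeeping point is tracking the factor-of-two loss in the accuracy parameter inherited from \Cref{thmbeta}, and observing that the frequency set $\fsub$ used to define $S$ in \Cref{algpml} enters the argument only through the event $\{S\in 2^\gset\}$, while all frequency-containment bounds are carried out with $\fsubp$.
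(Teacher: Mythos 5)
Your proposal is correct and follows essentially the same route as the paper: condition on the realized set $S$ (equivalently on $\xon$, using independence of the two halves), invoke \Cref{thmbeta} with frequency set $\fsubp$ for each $S'\in 2^\gset$, bound the containment failure by $\gamma$, and absorb the event $S\notin 2^\gset$ via the law of total probability. If anything, your explicit tracking of the $2\epsilon\to 4\epsilon$ accuracy degradation is cleaner than the paper's own writeup, which carries $2\epsilon$ through the proof despite the $4\epsilon$ in the theorem statement.
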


Using the theorem above, we already have a good estimate for $\bff_{S}(\bp)$ for appropriately chosen frequency subsets $\fsub, \fsubp$ and $\gset \subseteq \bX$. Further, we choose these subsets $\fsub, \fsubp$ and $\gset$ carefully so that the empirical estimate $\bff_{\bar{S}}(\hat{p})$ plus the correction bias with respect to $\bff_{\bar{S}}$ is close to $\bff_{\bar{S}}(\bp)$. Combining these together, we get the following results for entropy and distance to uniformity. 
\begin{thm}\label{thm:entr}
	If error parameter $\epsilon>\Omega\left(\frac{\log N}{N^{1-\alpha}}\right)$ for any constant $\alpha>0$,
	then for estimating entropy, the estimator \ref{algpml} for $\beta=n^{-\log n}$ is sample complexity optimal.
\end{thm}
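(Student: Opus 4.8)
The plan is to run Algorithm~\ref{algpml} with the choice of frequency sets that, for entropy, separates the ``head'' of the distribution (where a bias-corrected empirical estimate is accurate) from its ``tail'' (where polynomial-type estimators, and hence PML, are needed), and then to control the tail contribution via \Cref{thm:main} and the head contribution via the analysis of \cite{WY16}. We may assume $\epsilon\le\log N$ (the problem is trivial otherwise), so that $n=N^{\Theta(1)}$ and $\log n=\Theta(\log N)$ throughout. Fix constants $0<c_1<c_2<c_3$, set $n=\Theta\!\big(\tfrac{N}{\epsilon\log N}\big)$ (so that the $2n$ samples used by the algorithm are optimal up to a constant), and take
\[
\fsub=\{0,1,\dots,\ceil{c_1\log n}\},\qquad \gset=\{\,y\in\bX:\bp_y\le \tfrac{c_2\log n}{n}\,\},\qquad \fsubp=\{0,1,\dots,\ceil{c_3\log n}\}.
\]
With these choices $\rvS=\{y:\bff(\xon,y)\in\fsub\}$ is the set of ``tail'' symbols on which the algorithm uses the pseudo PML distribution $\bpml$, and $\bar{\rvS}=\bX\setminus\rvS$ is the set of ``head'' symbols on which it uses the bias-corrected empirical estimate from $\xtn$. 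Since $\rvS,\bar{\rvS}$ partition $\bX$ and entropy is separable, $\bff(\bp)=\bff_{\rvS}(\bp)+\bff_{\bar{\rvS}}(\bp)$, so by the triangle inequality and a final constant rescaling of $\epsilon$ it suffices to show, each with the required constant probability, that $|\bff_{\rvS}(\bp)-\bff_{\rvS}(\bpml)|=O(\epsilon)$ and $|\bff_{\bar{\rvS}}(\bp)-(\bff_{\bar{\rvS}}(\hat{\bp})+\text{bias correction})|=O(\epsilon)$.

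For the tail, I would invoke \Cref{thm:main} with the sets above and check its three hypotheses. First, for every $S'\in 2^{\gset}$ we need an estimator $\hat{\bff}:\Phi_{S'}^{n}\to\R$ with small error and failure probability $\delta$ on $\phisp\sim\bp$; we take the degree-$\Theta(\log n)$ polynomial-approximation estimator of \cite{WY16} restricted to the symbols of $S'$. Since every $y\in S'\subseteq\gset$ has $\bp_y\le c_2\log n/n$, the approximation error of $-x\log x$ by degree-$\Theta(\log n)$ polynomials on $[0,c_2\log n/n]$ bounds the per-symbol bias by $O(\tfrac{1}{n\log n})$, and summing over the at most $N$ symbols of $S'$ gives total bias $O(\tfrac{N}{n\log n})=O(\epsilon)$ for our choice of $n$; the bounded-coefficient bound of \cite{WY16} makes this estimator change by only $\mathrm{polylog}(n)/n$ when one sample of $\xtn$ is altered, so McDiarmid's inequality gives $\delta=\exp(-\Omega(\epsilon^2 n/\mathrm{polylog}(n)))$. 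Second, for $y\in\gset$ the frequency $\bff(\xtn,y)$ is $\mathrm{Bin}(n,\bp_y)$ with mean at most $c_2\log n$, so a Chernoff bound and a union bound over the at most $N$ elements of $\gset$ give $\Prob{\dfreq{\phisp}\not\subseteq\fsubp}\le N\cdot n^{-\Omega(c_3-c_2)}=:\reg$. Third, a symbol $y$ with $\bp_y>c_2\log n/n$ enters $\rvS$ only if $\bff(\xon,y)\le c_1\log n$, an event of probability $n^{-\Omega(c_2-c_1)}$ by a Chernoff bound, and there are at most $n/(c_2\log n)$ such $y$, so $\Prob{\rvS\notin 2^{\gset}}\le (n/\log n)\cdot n^{-\Omega(c_2-c_1)}$. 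Plugging these into \Cref{thm:main} with $\beta=n^{-\log n}$ and $|\fsubp|=\Theta(\log n)$ bounds $\Prob{|\bff_{\rvS}(\bp)-\bff_{\rvS}(\bpml)|>4\epsilon}$ by $\delta\cdot n^{|\fsubp|}/\beta+\reg+\Prob{\rvS\notin 2^{\gset}}$, which equals $\delta\cdot\exp(\Theta(\log^2 n))+N\cdot n^{-\Omega(1)}+(n/\log n)\cdot n^{-\Omega(1)}$. Taking $c_1,c_2,c_3$ to be large enough constants and invoking the hypothesis $\epsilon>\Omega(\log N/N^{1-\alpha})$---which forces $\epsilon^2 n=\tfrac{\epsilon N}{\log N}$ to dominate any fixed power of $\log N$, killing the first term, and makes $N/n$ small enough to kill the other two---shows this is $o(1)$, establishing the tail bound.

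For the head, note that $\bar{\rvS}$ depends only on $\xon$ while $\hat{\bp}$ is built from the independent sample $\xtn$, so we may condition on $\xon$. A Chernoff and union bound over the domain shows that, except with probability $N\cdot n^{-\Omega(1)}=o(1)$, every $y\in\bar{\rvS}$ has $\bp_y=\Omega(\log n/n)$; on this event, the bias-corrected empirical entropy estimator of \cite{WY16} applied to the symbols of $\bar{\rvS}$ using $\xtn$ has bias $O(\epsilon)$ and, by McDiarmid's inequality together with the bounded sensitivity of that estimator, deviates from its mean by more than $\epsilon$ with probability $o(1)$, provided $n=\Omega(N/(\epsilon\log N))$. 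This gives the head bound. Combining the two bounds, Algorithm~\ref{algpml} with $\beta=n^{-\log n}$ estimates entropy to accuracy $O(\epsilon)$ from $2n=O(N/(\epsilon\log N))$ samples with constant probability; since $\Omega(N/(\epsilon\log N))$ samples are necessary \cite{WY16}, the estimator is sample-complexity optimal in the stated regime.

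I expect the main obstacle to be the quantitative bookkeeping that makes all of these error terms small \emph{simultaneously} without using more than $O(N/(\epsilon\log N))$ samples. Two points are delicate. First, the tail competitor must fail with probability $\delta$ small enough to beat the factor $n^{|\fsubp|}/\beta=\exp(\Theta(\log^2 n))$ coming from \Cref{thm:main}; this needs the strong concentration of the \cite{WY16} polynomial estimator, i.e.\ the bound on its coefficients making its per-sample sensitivity $\mathrm{polylog}(n)/n$, and similarly one must control the sensitivity of the head estimator. Second, identifying the exact admissible range of $\epsilon$: the true requirement is that $\epsilon$ exceed a fixed $\mathrm{polylog}(N)/N$-type threshold (so that $\tfrac{\epsilon N}{\log N}$ dominates $\log^2 n$ and the domain-size union bounds are negligible), which is precisely what the clean hypothesis $\epsilon>\Omega(\log N/N^{1-\alpha})$ for any constant $\alpha>0$ encodes; working out the polylog powers is routine but tedious, and showing that one essentially cannot push below this threshold with the PML machinery is what makes the regime restriction natural.
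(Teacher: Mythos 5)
Your overall route is the same as the paper's: split the domain into a tail set $S$ (low empirical frequency in $\xon$) handled by the pseudo PML distribution via \Cref{thm:main}, and a head set $\bar S$ handled by the bias-corrected empirical estimator of \cite{WY16}; verify the three hypotheses of \Cref{thm:main} with Chernoff bounds for $\Prob{\dfreq{\phisp}\not\subseteq\fsubp}$ and $\Prob{S\notin 2^{\gset}}$, and beat the $n^{|\fsubp|}/\beta=\exp(\Theta(\log^2 n))$ blow-up by McDiarmid concentration of a polynomial-approximation estimator restricted to $S'$. This is exactly the paper's argument, with cosmetically different constants ($c_1<c_2<c_3$ versus $c_1$ and $8c_1$).

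There is, however, a genuine quantitative flaw at precisely the step you flag as the crux. You claim the degree-$\Theta(\log n)$ polynomial estimator of \cite{WY16}, restricted to $S'$, has per-sample sensitivity $\mathrm{polylog}(n)/n$, and hence $\delta=\exp(-\Omega(\epsilon^2 n/\mathrm{polylog}(n)))$. This is false: the coefficients of the optimal uniform approximation of $-x\log x$ of degree $L$ grow like $2^{3L}$ (\cite{CL11}, Lemma 2), so for $L=c_0\log n$ the sensitivity is $n^{\Theta(c_0)}/n$, polynomially --- not polylogarithmically --- larger than $1/n$. With the standard \cite{WY16} choice of degree (an absolute constant times $\log n$, independent of $\alpha$), McDiarmid only yields $\delta=\exp(-\Omega(\epsilon^2 n^{1-\Theta(1)}))$ with a fixed constant in the exponent loss, and for small $\epsilon$ (say $\epsilon\approx N^{-(1-\alpha)}$ with small $\alpha$) the product $\delta\cdot\exp(\Theta(\log^2 n))$ need not be $o(1)$; the argument as written does not close. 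The paper's fix (Lemma \ref{thm:entrochange}, following Lemma 2 of \cite{ADOS16}) is to shrink the degree to $L=0.25\,\alpha\log n$, which caps the coefficients at $n^{\alpha}/n$ and gives sensitivity $c\,n^{\alpha}/n$ at the cost of a $1/\alpha^2$ factor in the bias (still $O(\epsilon)$ since the bias is $O(N/(\alpha^2 n\log N))$); then $\delta=\exp(-2\epsilon^2 n^{1-2\alpha})$ does dominate $n^{O(\log n)}$ in the stated regime. A secondary point you gloss over: one cannot simply "restrict the \cite{WY16} estimator to $S'$" and inherit its sensitivity bound for free --- the paper proves this (Lemma \ref{lem:general_est}) by collapsing all symbols of $\bar S$ to a single symbol and absorbing the extra change $G_{L,g}(1)$, which is where the constant $9$ in place of $8$ comes from. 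Your head analysis and the final union bound match the paper.
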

For entropy, we already know from \cite{WY16} that the empirical distribution is sample complexity optimal if $\epsilon <c \frac{\log N}{N}$ for some constant $c>0$. Therefore the interesting regime for entropy estimation is when $\epsilon>\Omega\left(\frac{\log N}{N}\right)$ and our estimator works for almost all such $\epsilon$.
\begin{thm}\label{thm:dtu}
	Let $\alpha>0$ and error parameter  $\epsilon>\Omega\left(\frac{1}{N^{1-8\alpha}}\right)$, then for estimating distance from uniformity, the estimator \ref{algpml} for $\beta=n^{-\sqrt{\frac{n \log n}{N}}}$ is sample complexity optimal.
\end{thm}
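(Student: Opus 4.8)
The plan is to instantiate the general framework of \Cref{thm:main} and \Cref{algpml} with the specific choice of bad/good partition that is already standard for distance-to-uniformity estimation, and then to bound the two separate contributions (from $S$ and from $\bar S$) by the target accuracy. First I would recall from \cite{VV11b, JHW16} that the optimal sample complexity for estimating $\|\bp - u\|_1$ within additive accuracy $\epsilon$ is $O\!\left(\frac{N}{\log N}\frac{1}{\epsilon^2}\right)$, so I would set $n = \Theta\!\left(\frac{N}{\log N}\frac{1}{\epsilon^2}\right)$ and aim to show that \Cref{algpml} achieves accuracy $O(\epsilon)$ with constant probability using these many samples. The separable property here is $\bgg(t) = |t - \tfrac1N|$, and the natural frequency threshold is to declare a symbol ``bad'' when its empirical probability $\hat\bp_x$ (equivalently its multiplicity $\bff(\xon,x)$) is within a $\Theta(\epsilon)$-window of $\tfrac1N$; symbols far from $\tfrac1N$ have their contribution $|\bp_x - \tfrac1N|$ estimated accurately by the (corrected) empirical estimate $\bff_{\bar S}(\hat\bp)$, using standard concentration of the multiplicity around $n\bp_x$ and the fact that outside the window $|\bp_x - \tfrac1N|$ is comparable to $|\hat\bp_x - \tfrac1N|$. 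This handles the $\bff_{\bar S}(\hat\bp)$ term.

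Next I would control the $\bff_S(\bpml)$ term via \Cref{thm:main}. This requires three ingredients matching the hypothesis \eqref{eq:condio}: (i) a good set $\gset \subseteq \bX$ such that with high probability $S \in 2^\gset$ — here $\gset$ is the set of symbols whose \emph{true} probability $\bp_x$ lies in a slightly wider window of width $\Theta(\epsilon)$ around $\tfrac1N$, and the containment $\rvS \subseteq \gset$ holds with high probability because $\bff(\xon,x)$ concentrates around $n\bp_x$; (ii) a frequency subset $\fsubp$ such that $\Prob{\dfreq{\phisp} \not\subseteq \fsubp} \le \gamma$ for all $S' \in 2^\gset$ — since every $x \in \gset$ has $\bp_x = \Theta(1/N)$, its multiplicity in $\xtn$ is concentrated in a range of size $\otilde(\sqrt{n/N})$ around $n/N$, so $|\fsubp| = \otilde(\sqrt{n/N})$ suffices with $\gamma$ an inverse polynomial; (iii) an estimator $\hat\bff$ on $\Phi_{S'}^n$ with error $\le 2\epsilon$ and failure probability $\le \delta$ — for this I would invoke the known optimal distance-to-uniformity estimator (e.g.\ the polynomial-approximation or linear-programming estimator) applied only to the sub-domain $S'$, which at $n = \Theta(N/(\epsilon^2\log N))$ samples estimates $\bff_{S'}(\bp) = \sum_{x\in S'}|\bp_x - \tfrac1N|$ to accuracy $\epsilon$ with, say, $\delta$ exponentially small in $n$. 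Plugging into \Cref{thm:main} gives failure probability $\le \frac{\delta n^{|\fsubp|}}{\beta} + \gamma + \Prob{S \notin 2^\gset}$; with $|\fsubp| = \otilde(\sqrt{n/N})$ and $\beta = n^{-\sqrt{n\log n/N}}$, the dominant term $\frac{\delta n^{|\fsubp|}}{\beta}$ is $\delta \cdot n^{\otilde(\sqrt{n/N})}$, which is still $o(1)$ provided $\delta = \exp(-\omega(\sqrt{n/N}\log n))$; the condition $\epsilon > \Omega(1/N^{1-8\alpha})$ is exactly what makes $n$ small enough relative to $N$ that $\sqrt{n/N} = \otilde(N^{4\alpha}/\epsilon)$-type bounds keep this product controlled, while still large enough that the per-sub-domain estimator has the claimed exponentially small $\delta$.

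Combining the two halves by the triangle inequality, $|\bff(\bp) - (\bff_S(\bpml) + \bff_{\bar S}(\hat\bp) + \text{bias})| \le |\bff_S(\bp) - \bff_S(\bpml)| + |\bff_{\bar S}(\bp) - \bff_{\bar S}(\hat\bp) - \text{bias}| \le 4\epsilon + O(\epsilon) = O(\epsilon)$, and rescaling $\epsilon$ by a constant gives the claim; the sample count remains $n = O\!\left(\frac{N}{\log N}\frac{1}{\epsilon^2}\right)$, matching the lower bound. I expect the main obstacle to be the careful bookkeeping in step (ii)–(iii): one must choose the window widths (for $S$, for $\gset$, for $\fsubp$) and the sample count $n$ simultaneously so that (a) the empirical part on $\bar S$ is accurate, (b) $|\fsubp|$ is small enough that $n^{|\fsubp|}/\beta$ does not blow up, and (c) the sub-domain estimator still attains accuracy $\epsilon$ with the required exponentially small $\delta$ on the reduced domain $S'$ — and verifying that the threshold $\epsilon > \Omega(1/N^{1-8\alpha})$ is precisely the regime where all these constraints are simultaneously satisfiable (the factor $8$ presumably tracking several nested $N^\alpha$ slacks). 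A secondary technical point is justifying that restricting any optimal distance-to-uniformity estimator to act only on the coordinates in $S'$ preserves both its accuracy guarantee for $\bff_{S'}(\bp)$ and, crucially, its failure probability bound $\delta$; this should follow because the estimator's guarantee is ``per-coordinate additive,'' but it needs to be stated and checked against whichever estimator from \cite{VV11b, JHW16} we borrow.
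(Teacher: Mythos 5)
Your overall route is the same as the paper's: choose the bad set by a multiplicity window of width $\otilde(\sqrt{n/N})$ around $n/N$, take $\gset$ to be the symbols with $\bp_x$ in a slightly wider window around $1/N$, plug into \Cref{thm:main} with $|\fsubp| = \otilde(\sqrt{n/N})$ and $\beta = n^{-\sqrt{n\log n/N}}$, handle $\bar S$ by the bias/variance analysis of the empirical estimator from \cite{JHW17}, and union bound. (One small slip: the probability-space window is of width $\tilde\Theta\bigl(\sqrt{\log n/(nN)}\bigr) = \tilde\Theta(\epsilon/N)$, not $\Theta(\epsilon)$; your later multiplicity-window size $\otilde(\sqrt{n/N})$ is the correct one.)

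The genuine gap is in ingredient (iii), which you treat as a "secondary technical point": you assert that an off-the-shelf optimal distance-to-uniformity estimator, restricted to $S'$, attains accuracy $\epsilon$ with failure probability $\delta$ exponentially small in $n$, justified by "per-coordinate additivity." Per-coordinate additivity only controls the bias; the known analyses of \cite{VV11b,JHW16,JHW17} bound bias and variance and give \emph{constant} failure probability via Chebyshev, which is far too weak here since the term $\delta n^{|\fsubp|}/\beta = \delta\, n^{\otilde(\sqrt{n/N})}$ must be $o(1)$. The paper's mechanism for the exponential tail is \Cref{thm:dtuchange} plus McDiarmid: one must construct an $S$-pseudo-profile estimator with bias at most $\epsilon$ \emph{and} bounded differences $c\, n^{\alpha}/n$ under a single sample change, which requires bounding the largest coefficient of the degree-$L$ optimal polynomial approximation of $|x - 1/N|$ by $2^{3L}$ (via \cite{CL11}) with $L = \Theta(\alpha\log n)$, and also modifying the \cite{JHW17} estimator (the branch that outputs $0$ when the first-half frequency is in the window but the second-half frequency is not) precisely so that the sensitivity bound holds — the raw \cite{JHW17} estimator does not obviously have it, and no such sensitivity bound is known for the linear-programming estimator you mention as an alternative. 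This is also where the parameter bookkeeping actually happens: $\delta = \expo{-2\epsilon^2 n^{1-2\alpha}}$ from McDiarmid, traded against $n^{\otilde(\sqrt{n/N})}/\beta$, is what produces the admissible range $\epsilon > \Omega(1/N^{1-8\alpha})$, so without supplying this bounded-difference construction your argument has no source for the exponentially small $\delta$ and the claimed range of $\epsilon$.
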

Note that the estimator in \cite{JHW17} also requires that the error parameter $\epsilon \geq \frac{1}{N^C}$, where $C>0$ is some constant.
\section{Analysis of General Framework for Symmetric Property Estimation}\label{app:universal}
Here we provide proofs of the main results for our general framework (Theorem`\ref{thmbeta} and \ref{thm:main}). These results weakly depend on the property and generalize results in \cite{ADOS16}. The PML based estimator in \cite{ADOS16} is sample competitive only for a restricted error parameter regime and this stems from the large number of possible profiles of length $n$. Our next lemma will be useful to address this issue and later we show how to use this result to prove Theorems \ref{thmbeta} and \ref{thm:main}.

\begin{lemma}\label{lem:bcard}
	 For any subset $S \subseteq \bX$ and $\fsub \in 2^{\Z}$, if set $B$ is defined as $B\defeq \{\phis \in \Phisn~|~\dfreq{\phis} \subseteq \fsub \}$, then the cardinality of set $B$ is upper bounded by $(n+1)^{|\fsub|}$.
	\end{lemma}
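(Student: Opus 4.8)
The plan is to show that a pseudo profile $\phis$ with $\dfreq{\phis} \subseteq \fsub$ is completely determined by specifying, for each allowed frequency $j \in \fsub$, the value $\Fsj = |\{x \in S : \bff(y^n,x) = j\}|$, and that each such value lies in $\{0,1,\dots,n\}$. Once this is established, counting the number of such profiles reduces to counting functions from $\fsub$ to $\{0,\dots,n\}$, of which there are at most $(n+1)^{|\fsub|}$.

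First I would recall the definition: an element $\phis \in \Phisn$ is by definition the tuple $(\Fsj)_{j \in [n]}$ for some length-$n$ sequence $y^n$, so a pseudo profile is specified by the list of coordinates $\Fsj$ for $j \in [n]$. Now suppose $\phis \in B$, i.e. $\dfreq{\phis} \subseteq \fsub$. By the definition of $\mathrm{Freq}$, the set $\dfreq{\phis}$ is exactly the set of frequencies $j$ for which $\Fsj \neq 0$ (more precisely, the set of distinct frequencies appearing among elements of $S$); hence $\dfreq{\phis} \subseteq \fsub$ forces $\Fsj = 0$ for every $j \notin \fsub$. Therefore $\phis$ is determined entirely by the sub-tuple $(\Fsj)_{j \in \fsub \cap [n]}$, and a fortiori by $(\Fsj)_{j \in \fsub}$.

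Next I would bound the range of each relevant coordinate: since $\Fsj = |\{x \in S : \bff(y^n,x) = j\}|$ counts a subset of the $n$ sample positions' worth of distinct symbols — in fact each such symbol with frequency $j \geq 1$ consumes at least one of the $n$ positions, and the $j=0$ coordinate is not part of the profile by the footnote convention — we have $\Fsj \in \{0,1,\dots,n\}$ for every $j$. (Even the crude bound $\Fsj \le n$ suffices here; a sharper bound would only improve constants.) Consequently the map $\phis \mapsto (\Fsj)_{j \in \fsub}$ is an injection from $B$ into $\{0,1,\dots,n\}^{\fsub}$, a set of size $(n+1)^{|\fsub|}$. This gives $|B| \le (n+1)^{|\fsub|}$, as claimed.

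The argument is essentially bookkeeping, so there is no serious obstacle; the only point requiring a little care is the edge case where $\fsub$ contains frequencies larger than $n$ (which simply cannot occur in any realizable $\phis$, so those coordinates are forced to $0$ and contribute nothing) or the value $0$ (which, by the convention that the profile omits $\F(0)$, is likewise not a free coordinate). In all these cases the count can only go down, so the stated upper bound $(n+1)^{|\fsub|}$ holds unconditionally, and one should note this so the bound is not accidentally claimed to be tight.
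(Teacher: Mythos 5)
Your proposal is correct and follows essentially the same argument as the paper: both encode each $\phis \in B$ injectively by the tuple of counts $\left(\Fsj\right)_{j \in \fsub}$, each coordinate lying in $\{0,1,\dots,n\}$, which immediately gives the bound $(n+1)^{|\fsub|}$. The extra remarks about frequencies outside $[n]$ or the value $0$ are harmless and do not change the argument.
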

\begin{proof}[Proof of \Cref{thmbeta}]
	Using the law of total probability we have,
	\begin{align*}
	\begin{split}
	\Prob{|\bff_{{S}}(\bp)-\bff_{{S}}(\bpml)|\geq 2\epsilon} &=\Prob{|\bff_{{S}}(\bp)-\bff_{{S}}(\bpml)|\geq 2\epsilon~,~\dfreq{\phis} \subseteq \fsub}\\
	&\quad + \Prob{ |\bff_{{S}}(\bp)-\bff_{{S}}(\bpml)|\geq 2\epsilon~,~\dfreq{\phis} \not\subseteq \fsub},\\
	& \leq \Prob{|\bff_{{S}}(\bp)-\bff_{{S}}(\bpml)|\geq 2\epsilon~,~\dfreq{\phis} \subseteq \fsub}  + \Prob{\dfreq{\phis} \not\subseteq \fsub}~.
	\end{split}
	\end{align*}
	
	Consider any $\phis \sim \bp$. If $\probbp{\phis}>\delta/\beta$, then we know that $\probbpml{\phis}>\delta$. For $\beta \leq 1$, we have $\probbp{\phis}>\delta$ that implies $|\bff_{S}(\bp)-\estiS{\phis}| \leq \epsilon$. Further $\probbpml{\phis}>\delta$ implies $|\bff_{S}(\bpml)-\estiS{\phis}| \leq \epsilon$. Using triangle inequality we get,
	$|\bff_{S}(\bp)-\bff_{S}(\bpml)| \leq |\bff_{S}(\bp)-\estiS{\phis}|+|\bff_{S}(\bpml)-\estiS{\phis}| \leq 2\epsilon$.
	Note we wish to upper bound the probability of set: $\bset\defeq \{\phis \in \Phisn~|~\dfreq{\phis} \subseteq \fsub \text{ and }|\bff_{{S}}(\bp)-\bff_{{S}}(\bpml)|\geq 2 \epsilon\}$. From the previous discussion, we get $\probbp{\phis} \leq \delta/\beta$ for all $\phis \in \bset$. Therefore,
	\begin{align*}
	\Prob{|\bff_{{S}}(\bp)-\bff_{{S}}(\bpml)|\geq 2\epsilon~,~\dfreq{\phis} \subseteq \fsub}=\sum_{\phis \in \bset }\probbp{\phis} \leq \frac{\delta}{\beta}|\bset| \leq \frac{\delta}{\beta} (n+1)^{|\fsub|}~.
	\end{align*}
	In the final inequality, we use $\bset \subseteq \{\phis \in \Phisn~|~\dfreq{\phis} \subseteq \fsub\}$ and invoke \Cref{lem:bcard}.
\end{proof}

\begin{proof}[Proof for \Cref{thm:main}]
Using Bayes rule we have:
	\begin{equation}
	\begin{split}
	\Prob{|\bff_{S}(\bp)-\bff_{S}(\bpml)|>2\epsilon}&=\sum_{S'\subseteq \bX} \Prob{|\bff_{S}(\bp)-\bff_{S}(\bpml)|>2\epsilon~|~S=S'}\Prob{S=S'}\\
	&\leq \sum_{S' \in 2^\gset} \Prob{|\bff_{S}(\bp)-\bff_{S}(\bpml)|>2\epsilon~|~S=S'}\Prob{S=S'}+ \Prob{S \notin 2^\gset}~.
	\end{split}
	\end{equation}
	In the second inequality, we use $\sum_{S' \notin 2^\gset} \Prob{|\bff_{S}(\bp)-\bff_{S}(\bpml)|>2\epsilon ~,~ S=S'} \leq \Prob{S \notin 2^\gset}$. Consider the first term on the right side of the above expression and note that it is upper bounded by, $\sum_{S' \in 2^\gset} \Prob{|\bff_{S'}(\bp)-\bff_{S'}(\bpmlp)|>2\epsilon}\Prob{S=S'}
\leq \sum_{S' \in 2^\gset} \left[\frac{\delta n^{|F'|}}{\beta}+\Prob{\dfreq{\phisp} \not\subseteq \fsubp} \right] \Prob{S=S'}
\leq \frac{\delta n^{|F'|}}{\beta}+\gamma$. 
	In the first upper bound, we removed randomness associated with the random set $S$ and used $\Prob{|\bff_{S}(\bp)-\bff_{S}(\bpml)|>2\epsilon~|~S=S'}= \Prob{|\bff_{S'}(\bp)-\bff_{S'}(\bpmlp)|>2\epsilon}$.
	In the first inequality above, we invoke \Cref{thmbeta} using conditions from \Cref{eq:condio}. In the second inequality, we use $\sum_{S' \in 2^\gset} \Prob{S=S'} \leq 1$ and $\Prob{\dfreq{\phis} \not\subseteq \fsubp, S=S'} \leq \gamma$. The theorem follows by combining all the analysis together.
\end{proof}

 \section{Applications of the General Framework}\label{sec:appl}
Here we provide applications of our general framework (defined in \Cref{sec:results}) using results from the previous section. We apply our general framework to estimate entropy and distance to uniformity. In \Cref{sec:entr} and \Cref{sec:dtu} we analyze the performance of our estimator for entropy and distance to uniformity estimation respectively.
\subsection{Entropy estimation}\label{sec:entr}
In order to prove our main result for entropy (\Cref{thm:entr}), we first need the existence of an estimator for entropy with some desired properties. The existence of such an estimator will be crucial to bound the failure probability of our estimator. A result analogous to this is already known in \cite{ADOS16} (Lemma 2) and the proof of our result follows from a careful observation of \cite{ADOS16, WY16}. We state this result here but defer the proof to appendix.
\begin{lemma}\label{thm:entrochange}
	Let $\alpha>0$, $\epsilon>\Omega\left(\frac{\log N}{N^{1-\alpha}}\right)$ and $S \subseteq \bX$, then for entropy on subset $S$ ($\sum_{y\in S}\bp_{y}\log \frac{1}{\bp_{y}}$) there exists an $S$-pseudo profile based estimator that use the optimal number of samples, has bias less than $\epsilon$ and if we change any sample, changes by at most $c\cdot \frac{n^{\alpha}}{n}$, where $c$ is a constant.
\end{lemma}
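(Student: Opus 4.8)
\textbf{Proof proposal for Lemma~\ref{thm:entrochange}.}
The plan is to construct the desired estimator by starting from the known optimal entropy estimator of~\cite{WY16} and adapting it so that (i) it only accounts for the contribution of domain elements in $S$, (ii) it operates on the $S$-pseudo profile rather than the full profile, and (iii) its bounded-differences (McDiarmid) sensitivity is controlled. The point of departure is the estimator $\hat{\bff}_{\mathrm{WY}}$ from~\cite{WY16}: for frequencies above a threshold of order $\Theta(\log N)$ it uses the bias-corrected empirical plug-in $-\frac{j}{n}\log\frac{j}{n} + \frac{1}{2n}$, and for frequencies below that threshold it uses a best-polynomial-approximation estimator of degree $L = \Theta(\log N)$ applied to the low-frequency fingerprint. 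This estimator is known to be sample-optimal for accuracy $\epsilon$ whenever $\epsilon \gtrsim \frac{\log N}{N}$, which is implied by the hypothesis $\epsilon > \Omega\!\left(\frac{\log N}{N^{1-\alpha}}\right)$.

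First I would define the $S$-restricted analogue: run exactly the same two-regime rule but sum only over those frequencies $j$ and those domain elements that lie in $S$; since $\phis$ records precisely the frequencies in $S$, this is a well-defined function of $\phis$. The bias claim then follows from the bias analysis in~\cite{WY16}/Lemma 2 of~\cite{ADOS16} applied verbatim term-by-term: the entropy functional is separable, $\bff_S(\bp) = \sum_{y\in S}\bp_y\log\frac1{\bp_y}$, and the per-symbol bias bounds (polynomial-approximation error of order $2^{-L}\cdot\text{poly}$ for small probabilities, $O(1/n)$-type corrections for large probabilities) add up over $S$ to something bounded by $\epsilon$ exactly as in the unrestricted case, using $\epsilon > \Omega(\frac{\log N}{N^{1-\alpha}})$ to absorb the $N$-dependent factors. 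The sample-optimality statement is inherited because restricting to $S$ only makes the task easier, and the lower bound instances from~\cite{WY16} can be taken supported inside $S$ when $S$ is large, or the contribution of $S$ is negligible when $S$ is small.

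The main work is the bounded-differences bound: changing one of the $n$ samples $x^n_i$ from symbol $a$ to symbol $b$ changes the frequency of $a$ from some value to one less and of $b$ from some value to one more (and may move $a$ or $b$ in or out of the ``small frequency'' regime). I would bound the change in the large-frequency part directly: each affected term $-\frac{j}{n}\log\frac{j}{n}+\frac1{2n}$ changes by $O\!\left(\frac{\log n}{n}\right)$, which is already $O(n^\alpha/n)$. For the small-frequency part, I would use the Lipschitz property of the degree-$L$ polynomial estimator in the fingerprint: the coefficients of the best polynomial approximation of $x\log\frac1x$ on $[0,\Theta(\log N)/n]$, rescaled, have magnitude bounded by $2^{O(L)} = N^{O(1)}$ in the naive bound, but using the Chebyshev-basis representation as in~\cite{WY16} the per-sample sensitivity is actually $O\!\left(\frac{L}{n}\right) = O\!\left(\frac{\log N}{n}\right)$; combined with the at-most-two fingerprint entries that change, the small-frequency part moves by $O\!\left(\frac{\log N}{n}\right)$. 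Taking $c$ large enough and using $n = \Theta(N/(\epsilon\log N))$ together with $\epsilon > \Omega(\frac{\log N}{N^{1-\alpha}})$ converts $\frac{\log N}{n}$ into $c\cdot\frac{n^\alpha}{n}$, giving the claimed bound.

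The step I expect to be the real obstacle is pinning down the per-sample sensitivity of the polynomial-approximation estimator with the right (polynomial-in-$\log N$, not exponential) constant; the crude coefficient bound is exponential in $L$ and useless, so one must go through the Chebyshev/scaled-polynomial representation carefully and argue that a unit change in a single low fingerprint count perturbs the estimate by only $O(L/n)$. I would handle this by citing the relevant coefficient-magnitude and stability estimates from~\cite{WY16} (their analysis of the linear estimator's variance already implicitly controls exactly this quantity) and assembling them; everything else — the separability decomposition, the bias sum, and the large-frequency differences — is routine bookkeeping over the set $S$.
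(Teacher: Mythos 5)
Your overall architecture matches the paper's: restrict the two-regime estimator of \cite{WY16}/\cite{ADOS16} to the fixed subset $S$ (so it is a function of the $S$-pseudo profile), reuse the per-symbol bias bounds for the polynomial regime and the bias-corrected plug-in regime, and control the bounded-differences constant for McDiarmid. The bias and large-frequency bookkeeping you describe is essentially what the paper does (it additionally conditions on the high-probability events relating first-half counts to the true probabilities and charges bad events at most $\log N$ each, but that is routine).

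The genuine gap is exactly the step you flag as the ``real obstacle'' and then wave away: the per-sample sensitivity of the low-frequency polynomial estimator. You assert that a Chebyshev/scaled-basis argument, or the variance analysis of \cite{WY16}, gives sensitivity $O(L/n)=O(\log N/n)$ with $L=\Theta(\log N)$. No such bound is available: the variance analysis controls a second moment, not the worst-case bounded-differences constant, and the unbiased estimator of $P_{L,g}(\bp_y)$ has coefficients whose magnitude is only known to be bounded by $2^{3L}$ (Lemma 2 of \cite{CL11}, as used in Proposition 4 of \cite{WY16}), so with your choice $L=\Theta(\log N)$ the per-sample change is $2^{\Theta(L)}e^{L^{2}/n}/n = N^{\Theta(1)}/n$, which is not $c\cdot n^{\alpha}/n$ for small $\alpha$. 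The paper's resolution is different and is the actual content of the lemma's $\alpha$-dependence: it \emph{reduces the degree} to $L=0.25\,\alpha\log n$, so that the exponential coefficient bound $2^{3L}e^{L^{2}/n}/n$ is at most $n^{\alpha}/n$, and pays for this with an extra $O(1/\alpha^{2})$ factor in the bias, which is still at most $\epsilon$ precisely because $\epsilon>\Omega\bigl(\frac{\log N}{N^{1-\alpha}}\bigr)$ and $n=\Theta\bigl(\frac{N}{\epsilon\log N}\bigr)$. Without this degree-reduction (or a genuinely new sensitivity argument, which you do not supply), your McDiarmid constant is too large and the downstream concentration in Theorem \ref{thm:entr} fails. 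A second, smaller omission: the paper proves the $S$-restricted sensitivity bound (Lemma \ref{lem:general_est}) by collapsing all of $\bar S$ to a single symbol and invoking Lemma 5 of \cite{ADOS16}, then handling the case where the changed sample crosses between $S$ and $\bar S$ (constant $9$ instead of $8$); your sketch treats the two affected frequencies directly, which can be made to work but needs this crossing case addressed.
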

Combining the above lemma with \Cref{thm:main}, we next prove that our estimator defined in \Cref{algpml} is sample complexity optimal for estimating entropy in a broader regime of error $\epsilon$.
\begin{proof}[Proof for \Cref{thm:entr}]
	Let $\bff(\bp)$ represent the entropy of distribution $\bp$ and $\hat{\bff}$ be the estimator in \Cref{thm:entrochange}.
	Define $\fsub \defeq [0, \co \log n]$ for constant $c_{1}\geq 40$. Given the sequence $x^{2n}$, the random set $S$ is defined as $S\defeq \{y \in \bX~|~f(\xon,y) \leq \co \log n \}$.
	Let $\fsubp\defeq[0,8\co \log n]$, then by derivation in Lemma 6~\cite{ADOS16} (or by simple application of Chernoff \footnote{Note probability of many events in this proof can be easily bounded by application of Chernoff. These bounds on probabilities are also shown in \cite{ADOS16, WY16} and we use these inequalities by omitting details.}) we have,
	$$ \Prob{\dfreq{\phis} \not\subseteq \fsubp} =\Prob{\exists y\in \bX \text{ such that }\bff(\xon,y)\leq \co \log n \text{ and }\bff(\xtn,y)> 8\co \log n}\leq \frac{1}{n^{5}}~.$$
	Further let $\gset\defeq \{x \in \bX ~|~\bp_{x}\leq \frac{\consto \log N}{n}\}$, then by Equation 48 in \cite{WY16} we have, $\Prob{S \notin 2^\gset} \leq \frac{1}{n^4}$.
	Further for all $S' \in 2^{\gset}$ we have,
	$$\Prob{\dfreq{\phi_{S'}} \not\subseteq \fsubp}=\Prob{\exists y\in S' \text{ such that }\bff(\xtn,y)> 8\co \log n}\leq \gamma \text{ for } \gamma=\frac{1}{n^{5}}~.$$
	Note for all $x \in S'$, $\bp_{x} \leq \frac{\consto \log N}{n}$ and the above inequality also follows from Chernoff. All that remains now is to upper bound $\delta$. Using the estimator constructed in \Cref{thm:entrochange} and further combined with McDiarmid’s inequality, we have,
	$$\Prob{|\bff_{{S'}}(\bp)-\hat{\bff}(\phisp)|\geq 2 \epsilon} \leq 2\expo{\frac{-2\epsilon^2}{n(c\frac{n^\alpha}{n})^2}} \leq \delta \text{ for }\delta= \expo{-2\epsilon^2n^{1-2\alpha}}~.$$
Substituting all these parameters together in \Cref{thm:main} we have,
	\begin{equation}
	\begin{split}
	\Prob{|\bff_{S}(\bp)-\bff_{S}(\bpml)|>2\epsilon}&\leq \frac{\delta n^{|F'|}}{\beta}+\Prob{\dfreq{\phis} \not\subseteq \fsubp} + \Prob{S \notin 2^\gset}\\
	&\leq  \expo{-2\epsilon^2n^{1-2\alpha}} n^{9 \co \log n}+\frac{1}{n^4} \leq \frac{2}{n^4}~.
	\end{split}
	\end{equation}
	In the first inequality, we use \Cref{thm:main}. In the second inequality, we substituted the values for $\delta, \gamma, \beta$ and $\Prob{S \notin 2^\gset}$. In the final inequality we used $n=\Theta(\frac{N}{\log N}\frac{1}{\epsilon})$ and $\epsilon>\Omega\left(\frac{\log^3 N}{N^{1-4\alpha}}\right)$.
	
		Our final goal is to estimate $\bff(\bp)$, and to complete the proof we need to argue that $\bff_{\bar{S}}(\hat{\bp})$ + the correction bias with respect to $\bff_{\bar{S}}$ is close to $\bff_{\bar{S}}(\bp)$, where recall $\hat{\bp}$ is the empirical distribution on sequence $\xtn$. The proof for this follows immediately from \cite{WY16} (Case 2 in the proof of Proposition 4). \cite{WY16} bound the bias and variance of the empirical estimator with a correction bias and applying Markov inequality on their result we get $\Prob{|\bff_{\bar{S}}(\bp)-(\bff_{\bar{S}}(\hat{\bp})+\frac{|\bar{S}|}{n})|>2\epsilon} \leq \frac{1}{3}$, where $\frac{|\bar{S}|}{n}$ is the correction bias in \cite{WY16}. Using triangle inequality, our estimator fails if either $|\bff_{\bar{S}}(\bp)-(\bff_{\bar{S}}(\hat{\bp})+\frac{|\bar{S}|}{n})|>2\epsilon$ or $|\bff_{S}(\bp)-\bff_{S}(\bpml)|>2\epsilon$. Further by union bound the failure probability is at most $\frac{1}{3}+\frac{2}{n^4}$, which is a constant.
\end{proof}
\subsection{Distance to Uniformity estimation}\label{sec:dtu}
Here we prove our main result for distance to uniformity estimation (\Cref{thm:dtu}). First, we show existence of an estimator for distance to uniformity with certain desired properties. Similar to entropy, a result analogous to this is shown in \cite{ADOS16} (Lemma 2) and the proof of our result follows from the careful observation of \cite{ADOS16, JHW17}. We state this result here but defer the proof to \Cref{sec:omitted}.
\begin{lemma}\label{thm:dtuchange}
	Let $\alpha>0$ and $S \subseteq \bX$, then for distance to uniformity on $S$ ($\sum_{y \in S}|\bp_{y}-\frac{1}{N}|$) there exists an $S$-pseudo profile based estimator that use the optimal number of samples, has bias at most $ \epsilon$ and if we change any sample, changes by at most $c\cdot \frac{n^{\alpha}}{n}$, where $c$ is a constant.
\end{lemma}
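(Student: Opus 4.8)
The plan is to mirror the proof of \Cref{thm:entrochange}, replacing the entropy estimator of \cite{WY16} by the optimal distance-to-uniformity estimator of \cite{JHW17} and the per-symbol functional $x\log\tfrac{1}{x}$ by $|x-\tfrac{1}{N}|$, following the recipe already used for such reductions in \cite{ADOS16}. The first observation is that the optimal estimator of \cite{JHW17} is \emph{symbol-separable} and depends on the sample only through the multiset of frequencies: there is a per-symbol rule $\hat{g}:\Z\to\R$ --- in fact a pair of rules, one for the ``smooth'' frequency range and one for the ``nonsmooth'' window around $n/N$ --- so that the estimate equals $\sum_{y\in\bX}\hat{g}(\bff(x^n,y))$, and (after the routine Poissonization reductions carried out in \cite{ADOS16}) this may be taken to be a genuine function of the $n$ i.i.d.\ samples. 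Restricting the sum to $y\in S$ yields an estimator $\hat{\bff}:\Phisn\to\R$ for $\bff_S(\bp)=\sum_{y\in S}|\bp_y-\tfrac{1}{N}|$ that reads off the sample only through the $S$-pseudo profile $\phis$. It then remains to check the three claimed properties: optimal sample size, bias at most $\epsilon$, and bounded differences $O(n^\alpha/n)$.

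For the sample size and bias, set $n=\Theta(\tfrac{N}{\log N}\cdot\tfrac{1}{\epsilon^2})$ with the hidden constant chosen so that \cite{JHW17} guarantees expected error at most $\epsilon$ for the full functional $\|\bp-u\|_1$; in particular the bias bound of \cite{JHW17}, which is proved termwise as a sum over $y\in\bX$ of the per-symbol biases $\big|\E[\hat{g}(\bff(x^n,y))]-|\bp_y-\tfrac{1}{N}|\big|$, is at most $\epsilon$. Since $|\E[\hat{\bff}(\phis)]-\bff_S(\bp)|\le\sum_{y\in S}\big|\E[\hat{g}(\bff(x^n,y))]-|\bp_y-\tfrac{1}{N}|\big|\le\sum_{y\in\bX}\big|\E[\hat{g}(\bff(x^n,y))]-|\bp_y-\tfrac{1}{N}|\big|$, restricting to $S$ only discards nonnegative summands, so the bias of $\hat{\bff}$ for $\bff_S(\bp)$ is still at most $\epsilon$, and the same sample size is used.

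For bounded differences, changing one coordinate of the sample (replacing a symbol $a$ by a symbol $b$) changes the frequency of at most the two symbols $a,b$, each by exactly one, and leaves all other frequencies fixed; hence $\hat{\bff}(\phis)$ changes by at most $2\sup_{j}|\hat{g}(j+1)-\hat{g}(j)|$. In the smooth range $\hat{g}$ is a bias-corrected version of $|\tfrac{j}{n}-\tfrac{1}{N}|$, whose consecutive values differ by $O(1/n)$. In the nonsmooth window $\hat{g}$ is $\tfrac{1}{n}$ times the (unbiased) plug-in for a degree-$L$ polynomial approximation of $|x-\tfrac{1}{N}|$ with $L=\Theta(\log n)$, written in the falling-factorial basis; telescoping the falling factorials exactly as in \cite{WY16} bounds $|\hat{g}(j+1)-\hat{g}(j)|$ over the window by $\mathrm{polylog}(n)/n$ once one controls the growth of the polynomial's coefficients (choosing the degree and window constants small enough, which costs only lower-order factors in the sample complexity), and $\mathrm{polylog}(n)/n\le\tfrac{c}{2}\cdot\tfrac{n^\alpha}{n}$. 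The two rules are glued at the regime boundary with a jump of the same order, so the supremum over all $j$ is $O(n^\alpha/n)$, and summing the two per-symbol sensitivities gives the claimed bound $c\cdot\tfrac{n^\alpha}{n}$.

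The main obstacle is exactly this last estimate in the nonsmooth window: one must inspect the coefficients of the polynomial estimator of \cite{JHW17} and check its continuity at the boundary with the empirical rule. This is routine rather than delicate here precisely because the lemma asks only for the generous bound $O(n^\alpha/n)$ instead of a tight $\mathrm{polylog}(n)/n$ --- the free constant $\alpha$ absorbs the polynomial coefficient growth, and it is this same $\alpha$ that resurfaces in the error regime of \Cref{thm:dtu}. Apart from replacing $x\log\tfrac{1}{x}$ by $|x-\tfrac{1}{N}|$ throughout, every step is the translation of the argument behind \Cref{thm:entrochange}.
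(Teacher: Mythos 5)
Your proposal follows essentially the same route as the paper: restrict the symbol-separable estimator of \cite{ADOS16}/\cite{JHW17} to the subset $S$ so that the bias is inherited termwise (per symbol) at the optimal sample size $n=\Theta\left(\frac{N}{\log N}\frac{1}{\epsilon^2}\right)$, and control the one-sample sensitivity through the polynomial regime with degree $L=\Theta(\alpha\log n)$, giving $O(n^{\alpha}/n)$; the paper packages the sensitivity step slightly differently (its general lemma collapses all of $\bar{S}$ into a single symbol and invokes the change-one-sample bound of \cite{ADOS16}, plus a triangle inequality for symbols crossing between $S$ and $\bar{S}$), but that is a cosmetic difference from your direct per-symbol increment bound. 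The one inaccuracy is your intermediate claim of a $\mathrm{polylog}(n)/n$ increment in the nonsmooth window: the coefficients of the degree-$L$ optimal uniform approximation grow like $2^{3L}=n^{\Theta(\alpha)}$ (this is the \cite{CL11} bound the paper uses), so the correct per-step bound is $n^{\Theta(\alpha)}/n$ rather than polylogarithmic --- which is exactly the $c\cdot n^{\alpha}/n$ the lemma asks for, so the conclusion is unaffected.
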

Combining the above lemma with \Cref{thm:main} we provide the proof for \Cref{thm:dtu}.
\begin{proof}[Proof for \Cref{thm:dtu}]
	Let $\bff(\bp)$ represent the distance to uniformity for distribution $\bp$ and $\hat{\bff}$ be the estimator in \Cref{thm:dtuchange}.
	Define $\fsub = [\frac{n}{N}-\sqrt{\frac{c_{1}n \log n}{N}}, \frac{n}{N}+\sqrt{\frac{c_{1}n \log n}{N}}]$ for some constant $c_1 \geq 40$. Given the sequence $x^{2n}$, the random set $S$ is defined as $S\defeq \{y \in \bX~|~f(\xon,y) \in \fsub \}$. Let $\fsubp=[\frac{n}{N}-\sqrt{\frac{8c_{1}n \log n}{N}}, \frac{n}{N}+\sqrt{\frac{8c_{1} n\log n}{N}}]$, then by derivation in Lemma 7 of \cite{ADOS16} (also shown in \cite{JHW17} \footnote{Similar to entropy, for many events their probabilities can be bounded by simple application of Chernoff and have already been shown in \cite{ADOS16, JHW17}. We omit details for these inequalities.}) we have,
	$$ \Prob{\dfreq{\phis} \not\subseteq \fsubp} =\Prob{\exists y\in \bX \text{ such that }\bff(\xon,y)\in \fsub \text{ and }\bff(\xtn,y) \notin \fsubp}\leq \frac{1}{n^{4}}~.$$
	 Further let $\gset\defeq \{x \in \bX ~|~\bp_{x}\in [\frac{1}{N}-\sqrt{\frac{2c_{1} \log n}{nN}}, \frac{1}{N}+\sqrt{\frac{2c_{1} \log n}{nN}}]\}$, then using Lemma 2 in \cite{JHW17} we get,
	$$\Prob{S \notin 2^\gset}=\Prob{\exists y \in \bX \text{ such that }\bff(\xon,y)\in \fsub \text{ and }\bp_{x} \notin G} \leq \frac{\log n}{n^{1-\epsilon}}~.$$
	Further for all $S' \in 2^{\gset}$ we have,
	$$\Prob{\dfreq{\phi_{S'}} \not\subseteq \fsubp}=\Prob{\exists y\in S' \text{ such that }\bff(\xtn,y)> 8\co \log n}\leq \gamma \text{ for } \gamma=\frac{1}{n}~.$$
	Note for all $x \in S'$, $\bp_{x} \in G$ and the above result follows from \cite{JHW17} (Lemma 1). All that remains now is to upper bound $\delta$. Using the estimator constructed in \Cref{thm:dtuchange} and further combined with McDiarmid’s inequality, we have,
	$$\Prob{|\bff_{{S'}}(\bp)-\hat{\bff}(\phisp)|\geq 2 \epsilon} \leq 2\expo{\frac{-2\epsilon^2}{n(c\frac{n^\alpha}{n})^2}} \leq \delta \text{ for }\delta= \expo{-2\epsilon^2n^{1-2\alpha}}~.$$
	Substituting all these parameters in \Cref{thm:main} we get,
	\begin{equation}\label{eq:entropml}
	\begin{split}
	\Prob{|\bff_{S}(\bp)-\bff_{S}(\bpml)|>2\epsilon}&\leq \frac{\delta n^{|F'|}}{\beta}+\Prob{\dfreq{\phis} \not\subseteq \fsubp} + \Prob{S \notin 2^\gset}\\
	&\leq  \expo{-2\epsilon^2n^{1-2\alpha}} n^{2\sqrt{\frac{8c_{1}n \log n}{N}}}+\frac{\log n}{n^{1-\epsilon}}+\frac{1}{n} \leq o(1)~.
	\end{split}
	\end{equation}
	In the first inequality, we use \Cref{thm:main}. In the second inequality, we substituted values for $\delta, \gamma, \beta$ and $\Prob{S \notin 2^\gset}$. In the final inequality we used $n=\Theta(\frac{N}{\log N}\frac{1}{\epsilon^2})$ and $\epsilon>\Omega\left(\frac{1}{N^{1-8\alpha}}\right)$.
	
	Our final goal is to estimate $\bff(\bp)$, and to complete the proof we argue that $\bff_{\bar{S}}(\hat{\bp})$ + correction bias with respect to $\bff_{\bar{S}}$ is close to $\bff_{\bar{S}}(\bp)$, where recall $\hat{\bp}$ is the empirical distribution on sequence $\xtn$. The proof for this case follows immediately from \cite{JHW17} (proof of Theorem 2). \cite{JHW17} define three kinds of events $\mathcal{E}_{1},\mathcal{E}_{2}$ and $\mathcal{E}_{3}$, the proof for our empirical case follows from the analysis of bias and variance of events $\mathcal{E}_{1}$ and $\mathcal{E}_{2}$. Further combining results in \cite{JHW17} with Markov inequality we get $\Prob{|\bff_{\bar{S}}(\bp)-\bff_{\bar{S}}(\hat{\bp})|>2\epsilon} \leq \frac{1}{3}$, and the correction bias here is zero. Using triangle inequality, our estimator fails if either $|\bff_{\bar{S}}(\bp)-(\bff_{\bar{S}}(\hat{\bp})+\frac{|\bar{S}|}{n})|>2\epsilon$ or $|\bff_{S}(\bp)-\bff_{S}(\bpml)|>2\epsilon$. Further by union bound the failure probability is upper bounded by $\frac{1}{3}+o(1)$, which is a constant.
\end{proof}
\section{Experiments}\label{sec:exp}
We performed two different sets of experiments for entropy estimation -- one to compare performance guarantees and the other to compare running times. In our pseudo PML approach, we divide the samples into two parts. We run the empirical estimate on one (this is easy) and the PML estimate on the other. For the PML estimate, any algorithm to compute an approximate PML distribution can be used in a black box fashion. 
An advantage of the pseudo PML approach is that it can use any algorithm to estimate the PML distribution as a black box, providing both competitive performance and running time efficiency. 
In our experiments, we use the heuristic algorithm in \cite{PJW17} to compute an approximate PML distribution. In the first set of experiments detailed below, we compare the performance of the pseudo PML approach with raw \cite{PJW17} and other state-of-the-art estimators for estimating entropy. Our code is available at \url{https://github.com/shiragur/CodeForPseudoPML.git}
\begin{figure}[!ht]
	\centering
	\includegraphics[width=\linewidth]{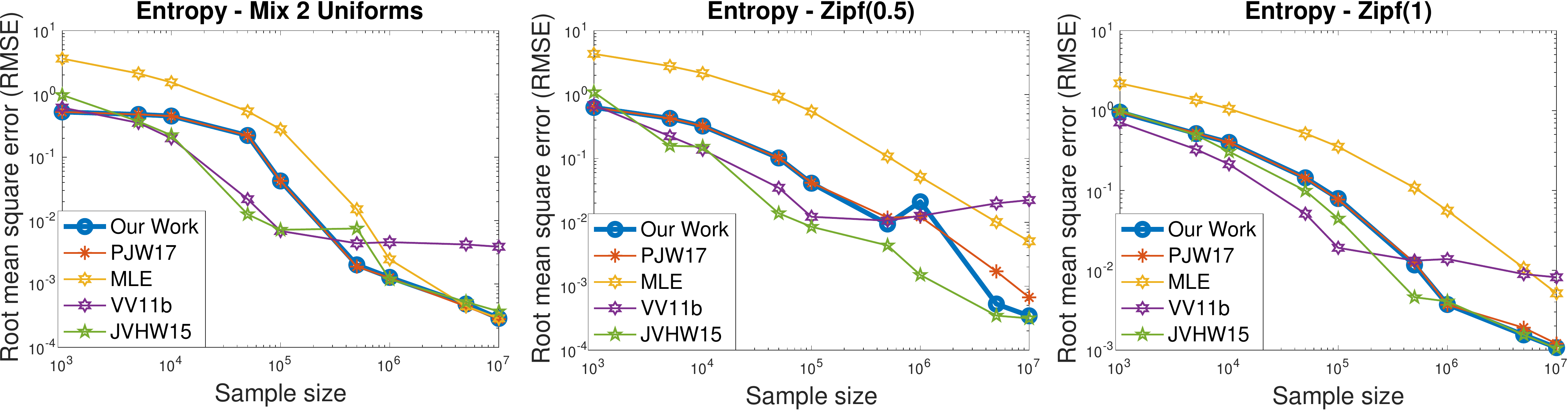}
	\label{fig:entr}\end{figure}
\vspace{-24pt}

Each plot depicts the performance of various algorithms for estimating entropy of different distributions with domain size $N=10^5$. 
Each data point represents 50 random trials. ``Mix 2 Uniforms'' is a mixture of two uniform distributions, with half the probability mass on
the first $N/10$ symbols, and $\mathrm{Zipf}(\alpha) \sim 1/i^{\alpha}$ with $i \in [N]$. 
MLE is the naive approach of using the empirical distribution with correction bias;
all the remaining algorithms are denoted using bibliographic citations.
In our algorithm we pick $threshold=18$ (same as \cite{WY16}) and our set $\mathrm{F}=[0,18]$ (input of \Cref{algpml}), i.e. we use the PML estimate on frequencies $\leq 18$ and empirical estimate on the rest. Unlike \Cref{algpml}, we do not perform sample splitting in the experiments -- we believe this requirement is an artifact of our analysis. For estimating entropy, the error achieved by our estimator is competitive with \cite{PJW17} and other state-of-the-art entropy estimators. Note that our results match \cite{PJW17} for small sample sizes because not many domain elements cross the threshold and for a large fraction of the samples, we simply run the \cite{PJW17} algorithm.

In the second set of experiments we demonstrate the running time efficiency of our approach. In these experiments, we compare the running time of our algorithm using \cite{PJW17} as a subroutine to the raw \cite{PJW17} algorithm on the $\mathrm{Zipf(1)}$ distribution. The second row is the fraction of samples on which our algorithm uses the empirical estimate (plus correction bias). The third row is the ratio of the running time of \cite{PJW17} to our algorithm. 
For large sample sizes, the entries in the EmpFrac row have high value, i.e. our algorithm applies the simple empirical estimate on large fraction of samples; therefore, enabling $10$x speedup in the running times. 
\vspace*{-0.1in}
\begin{center}
	\begin{tabular}{|c||c|c|c|c|c|c|c|c|} 
		\hline
		Samples size & $10^3$ & $5*10^3$ & $10^4$ & $5*10^4$ & $10^5$ & $5*10^5$ &$10^6$ & $5*10^6$ \\ [0.5ex]   
		\hline
		EmpFrac & 0.184 & 0.317 & 0.372 & 0.505 & 0.562 & 0.695 & 0.752 & 0.886 \\  
		\hline
		Speedup & 0.824 & 1.205 & 1.669 & 3.561 & 4.852 & 9.552 & 13.337 & 12.196 \\
		\hline
	\end{tabular}
\end{center}
\subsubsection*{Acknowledgments}
We thank the reviewers for the helpful comments, great suggestions, and positive feedback.
Moses Charikar was supported by a Simons Investigator Award, a Google Faculty Research Award and an Amazon Research Award.
Aaron Sidford was partially supported by NSF CAREER Award CCF-1844855.
\bibliographystyle{alpha}
\bibliography{PML}

\newcommand{\etalchar}[1]{$^{#1}$}
\begin{thebibliography}{HJWW17}

\bibitem[ADOS16]{ADOS16}
Jayadev Acharya, Hirakendu Das, Alon Orlitsky, and Ananda~Theertha Suresh.
\newblock A unified maximum likelihood approach for optimal distribution
  property estimation.
\newblock {\em CoRR}, abs/1611.02960, 2016.

\bibitem[AOST14]{AOST14}
Jayadev Acharya, Alon Orlitsky, Ananda~Theertha Suresh, and Himanshu Tyagi.
\newblock The complexity of estimating r\'{e}nyi entropy.
\newblock In {\em Proceedings of the Twenty-Sixth Annual ACM-SIAM Symposium on
  Discrete Algorithms}, pages 1855--1869, 2014.

\bibitem[CL11]{CL11}
T.~Tony Cai and Mark~G. Low.
\newblock Testing composite hypotheses, hermite polynomials and optimal
  estimation of a nonsmooth functional.
\newblock {\em Ann. Statist.}, 39(2):1012--1041, 04 2011.

\bibitem[CSS19]{CSS19}
Moses {Charikar}, Kirankumar {Shiragur}, and Aaron {Sidford}.
\newblock {Efficient Profile Maximum Likelihood for Universal Symmetric
  Property Estimation}.
\newblock {\em arXiv e-prints}, page arXiv:1905.08448, May 2019.

\bibitem[ET76]{ET76}
Bradley Efron and Ronald Thisted.
\newblock Estimating the number of unsen species: How many words did
  shakespeare know?
\newblock {\em Biometrika}, 63(3):435--447, 1976.

\bibitem[HJM17]{HJM17}
Yanjun {Han}, Jiantao {Jiao}, and Rajarshi {Mukherjee}.
\newblock {On Estimation of \$L\_\{r\}\$-Norms in Gaussian White Noise Models}.
\newblock {\em arXiv e-prints}, page arXiv:1710.03863, Oct 2017.

\bibitem[HJW18]{HJW18}
Yanjun Han, Jiantao Jiao, and Tsachy Weissman.
\newblock Local moment matching: A unified methodology for symmetric functional
  estimation and distribution estimation under wasserstein distance.
\newblock {\em arXiv preprint arXiv:1802.08405}, 2018.

\bibitem[HJWW17]{HJW17}
Yanjun {Han}, Jiantao {Jiao}, Tsachy {Weissman}, and Yihong {Wu}.
\newblock {Optimal rates of entropy estimation over Lipschitz balls}.
\newblock {\em arXiv e-prints}, page arXiv:1711.02141, Nov 2017.

\bibitem[HO19a]{HO19a}
Yi~Hao and Alon Orlitsky.
\newblock The broad optimality of profile maximum likelihood, 2019.

\bibitem[HO19b]{HO19b}
Yi~Hao and Alon Orlitsky.
\newblock Data amplification: Instance-optimal property estimation, 2019.

\bibitem[JHW16]{JHW16}
J.~Jiao, Y.~Han, and T.~Weissman.
\newblock Minimax estimation of the l1 distance.
\newblock In {\em 2016 IEEE International Symposium on Information Theory
  (ISIT)}, pages 750--754, July 2016.

\bibitem[JHW17]{JHW17}
Jiantao {Jiao}, Yanjun {Han}, and Tsachy {Weissman}.
\newblock {Minimax Estimation of the $L_1$ Distance}.
\newblock {\em arXiv e-prints}, page arXiv:1705.00807, May 2017.

\bibitem[JVHW15]{JVHW15}
J.~Jiao, K.~Venkat, Y.~Han, and T.~Weissman.
\newblock Minimax estimation of functionals of discrete distributions.
\newblock {\em IEEE Transactions on Information Theory}, 61(5):2835--2885, May
  2015.

\bibitem[Nem03]{Nem03}
Arkadi Nemirovski.
\newblock On tractable approximations of randomly perturbed convex constraints.
\newblock In {\em 42nd IEEE International Conference on Decision and Control
  (IEEE Cat. No. 03CH37475)}, volume~3, pages 2419--2422. IEEE, 2003.

\bibitem[OSS{\etalchar{+}}04]{OSSVZ04}
A.~Orlitsky, S.~Sajama, N.~P. Santhanam, K.~Viswanathan, and Junan Zhang.
\newblock Algorithms for modeling distributions over large alphabets.
\newblock In {\em International Symposium on Information Theory, 2004. ISIT
  2004. Proceedings.}, pages 304--304, 2004.

\bibitem[OSW16]{OSW16}
Alon Orlitsky, Ananda~Theertha Suresh, and Yihong Wu.
\newblock Optimal prediction of the number of unseen species.
\newblock {\em Proceedings of the National Academy of Sciences},
  113(47):13283--13288, 2016.

\bibitem[PJW17]{PJW17}
D.~S. {Pavlichin}, J.~{Jiao}, and T.~{Weissman}.
\newblock {Approximate Profile Maximum Likelihood}.
\newblock {\em ArXiv e-prints}, December 2017.

\bibitem[RRSS07]{RRSS07}
S.~Raskhodnikova, D.~Ron, A.~Shpilka, and A.~Smith.
\newblock Strong lower bounds for approximating distribution support size and
  the distinct elements problem.
\newblock In {\em 48th Annual IEEE Symposium on Foundations of Computer Science
  (FOCS'07)}, pages 559--569, Oct 2007.

\bibitem[RVZ17]{RVZ17}
Aditi Raghunathan, Gregory Valiant, and James Zou.
\newblock Estimating the unseen from multiple populations.
\newblock {\em CoRR}, abs/1707.03854, 2017.

\bibitem[Tim14]{Tim63}
Aleksandr~Filippovich Timan.
\newblock {\em Theory of approximation of functions of a real variable},
  volume~34.
\newblock Elsevier, 2014.

\bibitem[VV11a]{VV11b}
G.~Valiant and P.~Valiant.
\newblock The power of linear estimators.
\newblock In {\em 2011 IEEE 52nd Annual Symposium on Foundations of Computer
  Science}, pages 403--412, Oct 2011.

\bibitem[VV11b]{VV11a}
Gregory Valiant and Paul Valiant.
\newblock Estimating the unseen: An n/log(n)-sample estimator for entropy and
  support size, shown optimal via new clts.
\newblock In {\em Proceedings of the Forty-third Annual ACM Symposium on Theory
  of Computing}, STOC '11, pages 685--694, New York, NY, USA, 2011. ACM.

\bibitem[WY15]{WY15}
Y.~{Wu} and P.~{Yang}.
\newblock {Chebyshev polynomials, moment matching, and optimal estimation of
  the unseen}.
\newblock {\em ArXiv e-prints}, April 2015.

\bibitem[WY16a]{WY16}
Y.~Wu and P.~Yang.
\newblock Minimax rates of entropy estimation on large alphabets via best
  polynomial approximation.
\newblock {\em IEEE Transactions on Information Theory}, 62(6):3702--3720, June
  2016.

\bibitem[WY16b]{WY16a}
Yihong {Wu} and Pengkun {Yang}.
\newblock {Sample complexity of the distinct elements problem}.
\newblock {\em arXiv e-prints}, page arXiv:1612.03375, Dec 2016.

\bibitem[YOSW18]{YOSW18}
Hao Yi, Alon Orlitsky, Ananda~Theertha Suresh, and Yihong Wu.
\newblock Data amplification: A unified and competitive approach to property
  estimation.
\newblock In {\em Advances in Neural Information Processing Systems}, pages
  8834--8843, 2018.

\bibitem[ZVV{\etalchar{+}}16]{ZVVKCSLSDM16}
James Zou, Gregory Valiant, Paul Valiant, Konrad Karczewski, Siu~On Chan,
  Kaitlin Samocha, Monkol Lek, Shamil Sunyaev, Mark Daly, and Daniel~G.
  MacArthur.
\newblock Quantifying unobserved protein-coding variants in human populations
  provides a roadmap for large-scale sequencing projects.
\newblock {\em Nature Communications}, 7:13293 EP --, 10 2016.

\end{thebibliography}
\appendix
\newcommand{\pml}{\bp_{\phi}}
\newcommand{\delem}{\mathrm{Distinct}}
\section{Support Estimation}\label{sec:support}
Here we study the PML based plug-in estimator for support estimation. \cite{ADOS16} showed that PML based plug-in estimator is sample complexity optimal for estimating support within additive accuracy $\epsilon \lprob$ for all $\epsilon>\frac{1}{\lprob^{0.2499}}$. Further for any $\epsilon<\frac{1}{\lprob^{\delta}}$ for some constant $\delta>0$, the empirical distribution based plug-in estimator is exact with high probability. Here we provide proofs for two main results described in \Cref{sec:results} for support. In \Cref{thm:supp} and \Cref{thm:approxsupp}, we show that PML and approximate PML distributions (under the constraint that all its probability values are $\geq \frac{1}{k}$) based plug-in estimators are sample complexity optimal for all parameter regimes, thus providing a better analysis for \cite{ADOS16}.

We next define a function that outputs the number of distinct frequencies in the profile. Later in \Cref{lem:pmlprob}, we show that the support of PML and approximate PML distribution is at least the number of distinct elements in the sequence.
\begin{defn}
	For any $S \subseteq \bX$, the function $\delem:\Phi^{n}\rightarrow \Z$, takes input $\phi$ and returns $\sum_{j\in[n]}\phi_j$. For any sequence $x^n$, we overload notation and use $\delem(x^n)$ to denote $\delem(\Phi(x^n))$. Note $\delem(\phi)$ and $\delem(x^n)$ denote the number of distinct domain elements observed in profile $\phi$ or sequence $x^{n}$ respectively.
	\end{defn}
\begin{lemma}\label{lem:pmlprob}
	For any distribution $\bp \in \simplex$ such that $\bp_x \in \{0\} \cup [\frac{1}{k},1]$ and a profile $\phi \in \Phi^{n}$, if $S(\bp)<\delem(\phi)$ then $\bp(\phi)=0$.
\end{lemma}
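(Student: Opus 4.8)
The plan is to argue that if a distribution $\bp$ places all its nonzero mass on at most $S(\bp)$ symbols, then no relabeling of those symbols can produce a sequence $y^n$ whose profile $\phi$ records more than $S(\bp)$ distinct observed domain elements; hence every term in the sum defining $\bbP(\bp,\phi)$ in \Cref{eqpml1} vanishes. First I would recall that $\delem(\phi)=\sum_{j\in[n]}\phi_j$ counts the number of domain elements with strictly positive frequency in any sequence $y^n$ with $\Phi(y^n)=\phi$. Fix such a $y^n$. Let $T\defeq\{x\in\bX \mid \bff(y^n,x)\geq 1\}$ be the set of symbols actually appearing in $y^n$; by definition $|T|=\delem(\phi)$.

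Next I would observe that $\bbP(\bp,y^n)=\prod_{x\in\bX}\bp_x^{\bff(y^n,x)}$, and split the product over $x\in T$ and $x\notin T$; the factors with $x\notin T$ have exponent $0$ and contribute $1$, so $\bbP(\bp,y^n)=\prod_{x\in T}\bp_x^{\bff(y^n,x)}$ with each exponent $\geq 1$. Now if $|T|=\delem(\phi)>S(\bp)$, then since $\bp$ has exactly $S(\bp)$ symbols with $\bp_x>0$ (here I use $\bp_x\in\{0\}\cup[\tfrac1k,1]$ only to note that ``support'' is well-defined; actually just $\bp_x>0$ for $S(\bp)$ values of $x$ suffices), there must exist some $x_0\in T$ with $\bp_{x_0}=0$. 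Then the factor $\bp_{x_0}^{\bff(y^n,x_0)}=0^{\,\bff(y^n,x_0)}=0$ since $\bff(y^n,x_0)\geq 1$, so $\bbP(\bp,y^n)=0$. This holds for every $y^n$ in the summation index set $\{y^n\in\bX^n \mid \Phi(y^n)=\phi\}$, so $\bp(\phi)=\sum_{y^n:\Phi(y^n)=\phi}\bbP(\bp,y^n)=0$.

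The statement hypothesis is $S(\bp)<\delem(\phi)$, which is exactly the condition $\delem(\phi)>S(\bp)$ used above, so the argument closes. I do not expect any real obstacle here: the only subtlety is making precise that $\delem(\phi)$ equals the number of distinct symbols in \emph{every} realizing sequence $y^n$ (this is immediate from the definition of profile, since $\sum_j \phi_j$ counts symbols of each positive frequency $j$), and that the profile by convention omits $\phi(0)$ so that $\delem(\phi)$ genuinely counts only \emph{observed} elements rather than the whole domain. Once that bookkeeping is in place the conclusion $\bp(\phi)=0$ is a one-line consequence of a zero factor appearing in each summand. The role of the constraint $\bp_x\in\{0\}\cup[\tfrac1k,1]$ is not needed for this particular lemma — it will matter in the subsequent support-estimation arguments — but I would keep it in the statement for consistency with how the lemma is invoked later.
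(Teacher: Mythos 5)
Your proposal is correct and follows the same argument as the paper's proof: every sequence realizing $\phi$ has $\delem(\phi)>S(\bp)$ distinct observed symbols, forcing a zero-probability symbol to appear and annihilating each summand in the definition of $\bp(\phi)$. You simply spell out the bookkeeping (and correctly note the interval constraint is not needed here) more explicitly than the paper does.
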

\begin{proof}
Consider sequences $x^{n}$ with $\Phi(x^n)=\phi$. All such sequences have $\delem(\phi)$ number of distinct observed elements that is strictly greater than $S(\bp)$ and distribution $\bp$ assigns probability zero for all these sequences.
\end{proof}
\newcommand{\suppof}{\mathrm{Support}}
\begin{proof}[Proof for \Cref{thm:supp}]
	Given $\phi$, let $\pml \in \simplex$, be the distribution with $\pml(x) \in \{0\}\cup[\frac{1}{k},1]$. If $\epsilon>\frac{1}{\lprob^{0.2499}}$, we know that plug-in approach on $\pml$ is sample complexity optimal~\cite{ADOS16}. We consider the regime where $\epsilon \leq \frac{1}{\lprob^{0.2499}}$ and here the number of samples $n=c \cdot \lprob\log \lprob$ for some constant $c \geq 2$. If $S(\pml) < \delem(\phi)$, then by \Cref{lem:pmlprob} we have $\pml(\phi)=0$ a contradiction because the empirical distribution assigns a non-zero probability value for observing $\phi$. Therefore, without loss of generality we assume $S(\pml) \geq \delem(\phi)$. We next argue that $S(\pml) = \delem(\phi)$. We prove this statement by contradiction. Suppose $S(\pml) > \delem(\phi)$, then define $\pml' \in \simplex$ to be the PML distribution under constraints $S(\pml')=\delem(\phi)$ and $\pml'(x) \in \{0\}\cup[\frac{1}{k},1]$. Let $\suppof:\simplex \rightarrow 2^{\bX}$ be a function that takes distribution $\bp$ as input and returns index set for the support of $\bp$. Now consider $\Prob{\pml,\phi}$, and recall $\Prob{\pml,\phi}=\sum_{\{x^n\in \bX^n~|~\Phi(x^n)=\phi\}}\Prob{\pml,x^n}$. Further note that $\sum_{\{x^n\in \bX^n~|~\Phi(x^n)=\phi\}}\Prob{\pml,x^n}=\sum_{\{S\subseteq \suppof(\pml)~|~|S|=\delem(\phi) \}} \sum_{\{x^n\in S^n~|~\Phi(x^n)=\phi\}}\Prob{\pml,x^n}$, therefore,
\begin{equation}\label{eq:a1}
\begin{split}
\Prob{\pml,\phi}&=\sum_{\{S\subseteq \suppof(\pml)~|~|S|=\delem(\phi) \}} \sum_{\{x^n\in S^n~|~\Phi(x^n)=\phi\}}\Prob{\pml,x^n}\\
& \leq \sum_{\{S\subseteq \suppof(\pml)~|~|S|=\delem(\phi) \}} \left(1-\frac{S(\pml)-|S|}{k} \right)^n \Prob{\pml',\phi}~.
\end{split}
\end{equation}
\newcommand{\pmls}{\bp_{\phi,S}}
In the second inequality, we use for all $x \in \bX$, $\pml(x)\in \{0\}\cup [\frac{1}{k},1]$ and we have $\sum_{x\in S}\pml(x) \leq \left(1-\frac{S(\pml)-|S|}{k} \right)$ and the inequality follows. 

We next upper bound the term $\sum_{\{S\subseteq \suppof(\pml)~|~|S|=\delem(\phi) \}} \left(1-\frac{S(\pml)-|S|}{k} \right)^n$. Note that, 
\begin{align*}
\sum_{\{S\subseteq \suppof(\pml)~|~|S|=\delem(\phi) \}} & \left(1-\frac{S(\pml)-|S|}{k} \right)^n  \leq \expo{-n\frac{S(\pml)-\delem(\phi)}{k}} \binom{S(\pml)}{\delem(\phi)} \\
& \leq  \expo{-n\frac{S(\pml)-\delem(\phi)}{k} + (S(\pml)-\delem(\phi))\log S(\pml) }\\
& \leq   \expo{ -\log k }~.
\end{align*}
In the second inequality, we use a weak upper bound on the quantity $\binom{S(\pml)}{\delem(\phi)}$. In the third and fourth inequality, we use $n=ck\log k$, $c\geq2$ and $k \geq  S(\pml) > \delem(\phi)$. Combining everything together we get, $\Prob{\pml,\phi} \leq \expo{ -\log k }\Prob{\pml',\phi}$. A contradiction because $\pml$ is the PML distribution.

Therefore if $n>2k\log k$, then the previous derivation implies, 
$$\Prob{S(\pml)=\delem(\phi)}=1~.$$ 
Further if $n>2k\log k$, then 
$$\Prob{S(\bp) =\delem(\phi)} \geq 1-k\expo{\frac{-n}{k}}~.$$ 
Combining previous two inequalities and substituting $n>2k\log k$ we get, $\Prob{S(\bp) =S(\pml)} \geq 1-\exps{-\log k}$, thus concluding the proof.
\end{proof}

\renewcommand{\bpml}{\bp^{\beta}_{\phi}}
\begin{proof}[Proof for \Cref{thm:approxsupp}]
		The proof for this result is similar to \Cref{thm:supp} and for completeness we reprove it. Given $\phi$, let $\pml,\bpml \in \simplex$, be PML and $\beta$-approximate PML distributions respectively under the constraint $\pml(x),\bpml(x) \in \{0\}\cup[\frac{1}{k},1]$. If $\epsilon>\frac{1}{\lprob^{0.2499}}$, by \cite{ADOS16} we already know that plug-in approach on $\bpml$ for $\beta=\exps{-\epsilon^2 n^{1-\alpha}}$ is sample complexity optimal with high probability. Here we consider the regime $\epsilon \leq \frac{1}{\lprob^{0.2499}}$ and in this case the number of samples $n=c \cdot \lprob\log \lprob$ for some large constant $c \geq 2$. If $S(\bpml) < \delem(\phi)$, then by \Cref{lem:pmlprob} we have $\bpml(\phi)=0$ which is a contradiction, because the empirical distribution clearly returns a non-zero probability value. Therefore, without loss of generality we assume $S(\bpml) \geq \delem(\phi)$. We next argue that $S(\bpml) \leq \delem(\phi)+\epsilon k$. We prove this statement by contradiction. Suppose $S(\bpml) > \delem(\phi)+\epsilon k$, then consider the $\Prob{\bpml,\phi}$, and recall $\Prob{\bpml,\phi}=\sum_{\{x^n\in \bX^n|\Phi(x^n)=\phi\}}\Prob{\bpml,x^n}$. Further note that $\sum_{\{x^n\in \bX^n|\Phi(x^n)=\phi\}}\Prob{\bpml,x^n}=\sum_{\{S\subseteq \suppof(\bpml)||S|=\delem(\phi) \}} \sum_{\{x^n\in S^n|\Phi(x^n)=\phi\}}\Prob{\bpml,x^n}$. Therefore,
	\begin{equation}
	\begin{split}
	\Prob{\bpml,\phi}&=\sum_{\{S\subseteq \suppof(\bpml)||S|=\delem(\phi) \}} \sum_{\{x^n\in S^n|\Phi(x^n)=\phi\}}\Prob{\bpml,x^n}\\
	& \leq \sum_{\{S\subseteq \suppof(\bpml)~|~|S|=\delem(\phi) \}} \left(1-\frac{S(\bpml)-|S|}{k} \right)^n \Prob{\pml,\phi}~.
	\end{split}
	\end{equation}
	In the final inequality we used for all $x \in \bX$, $\bpml(x)\in \{0\}\cup [\frac{1}{k},1]$ and we have $\sum_{x\in S}\bpml(x) \leq \left(1-\frac{S(\bpml)-|S|}{k} \right)$ and using the definition of $\pml$ the inequality follows. 
	
	We next upper bound the term $\sum_{\{S\subseteq \suppof(\bpml)~|~|S|=\delem(\phi) \}} \left(1-\frac{S(\bpml)-|S|}{k} \right)^n$. Note that, 
	\begin{align*}
	\sum_{\{S\subseteq \suppof(\bpml)~|~|S|=\delem(\phi) \}} & (1-\frac{S(\bpml)-|S|}{k} )^n \leq \exps{-n\frac{S(\bpml)-\delem(\phi)}{k}} \binom{S(\bpml)}{\delem(\phi)} \\
	& \leq \exps{-n\frac{S(\bpml)-\delem(\phi)}{k} + (S(\bpml)-\delem(\phi))\log S(\bpml) } \\
	& \leq  \exps{ (S(\bpml)-\delem(\phi))(\log k- c \log k } \\
	& \leq  \exps{ -\epsilon k \log k } < \exps{ -\epsilon^2 n^{1-4\alpha} }~.
	\end{align*}
	\newcommand{\pmls}{\bp_{\phi,S}}
	 In the second inequality, we use a weak upper bound for the quantity $\binom{S(\pml)}{\delem(\phi)}$. In the third and fourth inequality, we use $n=ck\log k$, $c\geq2$ and $S(\bpml) > \delem(\phi)+\epsilon k$. In the final inequality, we use $n^{1-\alpha}\leq k\log k$ for constant $\alpha>0$. Combining everything together we get $\Prob{\bpml,\phi} < \exps{ -\epsilon^2 n^{1-4\alpha} } \Prob{\pml,\phi}$, a contradiction on the definition of $\bpml$.
	
	Therefore if $n>2k\log k$, then the previous derivation implies $\Prob{|S(\bpml)-\delem(\phi)|\geq \epsilon k}=1$. Further if $n>2k\log k$, then $\Prob{S(\bp) =\delem(\phi)} \geq 1-k\exps{\frac{-n}{k}}$. Combining the previous two inequalities and substituting $n>2k\log k$ we get, $\Prob{|S(\bpml)-S(\bp)|\geq \epsilon k} \geq 1-\exps{-\log k}$, thus concluding the proof.
	\end{proof}
\newcommand{\dP}{\bp}
\newcommand{\smb}{y}
\newcommand{\dPsmb}{\bp_{\smb}}
\newcommand{\absv}[1]{| #1 |}
\newcommand{\Mltsmb}{n_{\smb}}
\newcommand{\absz}{N}
\newcommand{\nsmp}{n}
\newcommand{\NN}{\N}
\newcommand{\cO}{O}
\newcommand{\dst}{\epsilon}
\newcommand{\probof}[1]{\Prob{#1}}
\newcommand{\flnpwrss}[2]{#1^{\underline{#2}}}

\section{Omitted Proof from \Cref{app:universal}}
Here we provide the proof for \Cref{lem:bcard}.
\begin{proof}[Proof for \Cref{lem:bcard}] Fix an ordering on the elements of $\fsub$. Let $\fsub(i)$ denote the $i$'th frequency element of $\fsub$. For all $\phis \in B$, the set of distinct frequencies in $\phis$ is a subset of $\fsub$ and the length of $\phis$ is equal to $n$. Therefore, any element $\phis \in B$ can be encoded as a unique vector $v_{\phis} \in [0,n]^{\fsub}$, where $v_{\phis}(i) \defeq \F(\fsub(i))$ denotes the number of elements in $\phis$ that have frequency $F(i)$. Using the previous discussion, we have $|B| \leq |[0,n]^{\fsub}| \leq (n+1)^{|\fsub|}$.
\end{proof}

\section{Omitted Proofs from \Cref{sec:appl}}\label{sec:omitted}
Here, we present and prove results related to the existence of an estimator for entropy and distance to uniformity on a fixed subset $S\subseteq \bX$. Note the estimator we provide here is exactly same to the one presented in~\cite{ADOS16} but defined only on subset $S\subseteq \bX$. All the results and proofs presented here are similar to the ones in~\cite{ADOS16} and for completeness and verification purposes we reprove (with slight modifications) these results. As in~\cite{ADOS16}, we first provide a general definition of an estimator that works both for entropy and distance to uniformity. In \Cref{lem:general_est}, we prove a result that captures the maximum change of this general estimator by changing one sample. In section \ref{subsec:entr} and \ref{subsec:dtu}, we provide proofs for entropy and distance to uniformity respectively.

Given $2n$ samples $x^{2n}=(\xon,\xtn)$ from distribution $\bp$. Let
$\Mltsmb^{'}\defeq \bff(\xon,y)$, and $\Mltsmb\defeq \bff(\xtn,y)$ be the number of appearances of symbol $\smb$ in the first and second half respectively. We define the following estimator which is exactly the same as \cite{ADOS16} but defined only on subset $S$. For all $x \in S$,
\[
\hat{g}_{S}(x^{2n})
=\max\left\{\min\left\{\sum_{\smb \in S}{g_{\smb}},f_{S,\max}\right\}, 0\right\}.
\]
where $f_{S,\max}$ is the maximum value of the property $f$ on subset $S$ and for all $\smb \in S$,
\[
g_{\smb}=
\begin{cases}
G_{L,g}(\Mltsmb), & \text{ for } \Mltsmb^{'}<c_2\log N, \text{ and }
\Mltsmb< c_1\log N,\\ 0, & \text{ for } \Mltsmb^{'}<c_2\log k, \text{
  and } \Mltsmb\ge c_1\log N,\\ g \left(\frac{\Mltsmb}{n}\right)+ g_n
, & \text{ for } \Mltsmb^{'}\ge c_2\log N,
\end{cases}
\]
where $g_n$ is the first order bias correction term for $g$,
$G_{L,g}(\Mltsmb) = \sum^L_{i=1} b_i {(\frac{\Mltsmb}{n})}^{i}$ is
the unbiased estimator for $P_{L,g}(\bp_{y})$, the optimal uniform approximation of function $g$ by degree-$L$ polynomials on $[0; \co \frac{\log n}{n}]$. 
\begin{lemma}
  \label{lem:general_est}
  For any estimator $\hat{g}$ defined as above, changing any one of
  the sample changes the estimator by at most
  \[
  9\max\left(e^{L^2/n}\max
  |b_i|, \frac{L_g}{n}, g\left(\frac{c_1 \log (n)}n\right), g_n\right),
  \]
  where $L_g = n \max_{i\in \NN} |g(i/n)- g((i-1)/n)|$.
\end{lemma}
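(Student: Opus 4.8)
The plan is to bound the change in $\hat{g}_S(x^{2n})$ when a single sample is modified, by tracking how such a modification propagates through the definition of the estimator. First I would observe that since $\hat{g}_S$ is obtained from $\sum_{y \in S} g_y$ by clipping to $[0, f_{S,\max}]$, and clipping to an interval is $1$-Lipschitz, it suffices to bound the change in $\sum_{y \in S} g_y$. Changing one sample in $x^{2n}$ — say we change the $i$-th symbol from $a$ to $b$ — affects the multiplicities of at most two symbols: $\Mltsmb$ or $\Mltsmb'$ for $y = a$ decreases by one and for $y = b$ increases by one (depending on whether the changed coordinate lies in the first or second half of the samples). So at most two terms $g_a, g_b$ in the sum change, and each changes because one of $\Mltsmb^{(\prime)}$ shifts by one. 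Thus the total change is at most $2$ times the maximum over a single symbol $y$ of $|g_y^{\mathrm{new}} - g_y^{\mathrm{old}}|$ when either $\Mltsmb'$ or $\Mltsmb$ changes by one.

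The core of the argument is then a case analysis on which branch of the piecewise definition of $g_y$ the symbol falls into before and after the change. If both before and after we are in the first branch, the change is $|G_{L,g}(\Mltsmb) - G_{L,g}(\Mltsmb \pm 1)| = |\sum_{i=1}^L b_i((\Mltsmb/n)^i - ((\Mltsmb\pm1)/n)^i)|$; here I would use that $\Mltsmb \le c_1 \log N$ in this branch to bound $(\Mltsmb/n)^{i-1} \le$ something, and the standard Chebyshev/Markov-type estimate that $|b_i|$ times the derivative-type factor is controlled by $e^{L^2/n}\max_i|b_i|$ — this is exactly the kind of bound that appears in the analogous lemma of \cite{ADOS16}. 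If both before and after we are in the third branch, the change is $|g(\Mltsmb/n) - g((\Mltsmb\pm1)/n)| \le \frac{L_g}{n}$ by the definition of $L_g$. The trickier cases are the \emph{boundary transitions}: when changing one sample moves a symbol from the first branch to the second (so $g_y$ goes from $G_{L,g}(\Mltsmb)$ to $0$), or from the second/first branch to the third (when $\Mltsmb'$ crosses the threshold $c_2 \log N$). In the first-to-second transition the jump is at most $|G_{L,g}(c_1 \log N)|$, which is close to $g(c_1 \log(n)/n)$ up to the uniform approximation error, contributing the $g(c_1\log(n)/n)$ term. In the transition into the third branch, the jump is bounded by $|g(\Mltsmb/n) + g_n|$ plus whatever the old value was, contributing the $g_n$ term and again a $g(c_1\log(n)/n)$-type term.

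Collecting all cases, each single-symbol change is bounded by a constant times the maximum of the four quantities $e^{L^2/n}\max_i|b_i|$, $L_g/n$, $g(c_1\log(n)/n)$, and $g_n$; multiplying by the factor $2$ from the two affected symbols and tracking constants carefully across the boundary-transition cases (where one might pick up a sum of two or three of these bounds rather than just one) yields the stated constant $9$. I expect the main obstacle to be the boundary-transition cases: one has to be careful that at a threshold crossing, the estimator value on \emph{both} sides is individually bounded by a combination of these four terms — in particular bounding $|G_{L,g}(\Mltsmb)|$ for $\Mltsmb$ near $c_1 \log N$ in terms of $g(c_1 \log(n)/n)$ requires invoking the uniform polynomial approximation guarantee (that $P_{L,g}$ is close to $g$ on $[0, c_1 \log n / n]$, hence bounded there), and getting the constant down to $9$ requires not being wasteful in how these terms are combined. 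The rest is the routine piecewise bookkeeping sketched above, essentially following \cite{ADOS16} restricted to the subset $S$.
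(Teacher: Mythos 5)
Your outline is viable, but it follows a genuinely different route from the paper. You re-derive the bounded-difference property from scratch: clipping to $[0,f_{S,\max}]$ is $1$-Lipschitz, a single sample change perturbs the counts of at most two symbols by one, and then a case analysis over the branches of $g_y$ supplies the four quantities in the max (polynomial-coefficient estimate, $L_g/n$ increment, threshold crossings). The paper instead proves the lemma by a short reduction: replace every occurrence of a symbol of $\bar{S}$ in the sequence by one fixed dummy symbol $y'\in\bar{S}$; this leaves $\hat{g}_S$ unchanged since the estimator depends only on counts of symbols in $S$, and the modified sequence is then an input to which Lemma 5 of \cite{ADOS16} applies as a black box, giving the change bound $8\max(\cdot)$ for any single-sample change. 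The only new case is a change that moves a sample between $S$ and $\bar{S}\setminus\{y'\}$, which by the triangle inequality costs at most one extra term $G_{L,g}(1)$, absorbed into the max to yield the constant $9$. What the paper's route buys is that all of the delicate piecewise bookkeeping you sketch --- in particular the $e^{L^2/n}\max_i|b_i|$ bound for a unit change of $n_y$ in the polynomial branch and the boundary-transition cases, together with the explicit constant --- is imported wholesale; your route must actually redo it, restricted to $S$, which is essentially re-proving that cited lemma.

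If you do go the self-contained route, one step is not justified as written: when a change of a first-half sample pushes $n'_y$ across the threshold $c_2\log N$ while $n_y\ge c_1\log N$ (your second-to-third branch transition), the estimator jumps from $0$ to $g(n_y/n)+g_n$ with $n_y$ possibly much larger than $c_1\log N$, so the jump is not controlled by a ``$g(c_1\log(n)/n)$-type term'' as you claim (for entropy, $g(n_y/n)$ can be of constant size). Handling exactly these crossings, and the arithmetic that lands on the constant $9$ (which you assert rather than track), is the nontrivial content that the paper sidesteps by citing \cite{ADOS16} after the collapsing trick; your proposal either needs to resolve it explicitly or should invoke that lemma the way the paper does.
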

\begin{proof}
Given $2n$ samples $x^{2n}=(\xon,\xtn)$ from distribution $\bp$. Recall the estimator for entropy and distance to uniform from \cite{ADOS16},
\[
\hat{g}(x^{2n})
=\max\left\{\min\left\{\sum_{\smb \in S}{g_{\smb}},f_{\max}\right\}, 0\right\}.
\]
where $f_{\max}$ is the maximum value of the property $f$ and for all $\smb \in \bX$,
\[
g_{\smb}=
\begin{cases}
G_{L,g}(\Mltsmb), & \text{ for } \Mltsmb^{'}<c_2\log N, \text{ and }
\Mltsmb< c_1\log N,\\ 0, & \text{ for } \Mltsmb^{'}<c_2\log k, \text{
	and } \Mltsmb\ge c_1\log N,\\ g \left(\frac{\Mltsmb}{n}\right)+ g_n
, & \text{ for } \Mltsmb^{'}\ge c_2\log N,
\end{cases}
\]
Now construct a new sequence from $x^n$ as follows: replace all symbols in $\bar{S}$ (appearing in $x^n$) by a unique symbol $y' \in \bar{S}$ and call this new sequence $z^n$. Now note our estimator is unaffected by this change, because it only depends on the occurrences of elements in $S$. The change in the value of estimator in~\cite{ADOS16} by changing one sample in $z^n$ is upper bounded by:
\begin{equation}
8\max\left(e^{L^2/n}\max
|b_i|, \frac{L_g}{n}, g\left(\frac{c_1 \log (n)}n\right), g_n\right),
\end{equation}
The above result follows by Lemma 5 in~\cite{ADOS16}. We next study the change in the value of our estimator by changing one sample in $z^n$. Note this is equivalent to the change in the value of our estimator by changing one sample in $x^n$. The worst case change in the value of our estimator is when we take a symbol in $S$ (or $\bar{S} \backslash \{y'\}$) and replace it by a symbol in $\bar{S}\backslash \{y'\}$ (or $S$). In this case, by triangle inequality change in our estimator is upper bounded by $8\max\left(e^{L^2/n}\max|b_i|, \frac{L_g}{n}, g\left(\frac{c_1 \log (n)}n\right), g_n\right) + G_{L,g}(1)$ that is further upper bounded by $9\max\left(e^{L^2/n}\max|b_i|, \frac{L_g}{n}, g\left(\frac{c_1 \log (n)}n\right), g_n\right)$ and the result follows.

	\end{proof}

\subsection{Entropy}\label{subsec:entr}
Here we present proof sketch for the following: for entropy the estimator defined above has low bias and the value of the estimator does not change too much by change in one sample. This result is analogous to Lemma 6 in~\cite{ADOS16} and our proof for this lemma is very similar to~\cite{ADOS16} and for completeness sketch for the proof.
\begin{lemma}
  Let $g_n = 1/(2n)$ and $\alpha > 0$.  Suppose $c_1=2c_2$, and $c_2>35$,
  Further suppose that
  $n^3\left(\frac{16c_1}{\alpha^2}+\frac1{c_2}\right)>\log k\cdot\log
  n$. Then for all subset $S\subseteq \bX$, there exists a polynomial approximation of $- y \log y$ {with degree $L = 0.25 \alpha \log n$}, over
  $[0, c_1\frac{\log N}{n}]$ such that $\max_{i} |b_i| \leq
  n^{\alpha}/n$ and the bias of the entropy estimator on subset $S$ ($\sum_{y \in S}\bp_{y} \log \frac{1}{\bp_{y}}$) is at most
  $\cO\left(\left(1+\frac{1}{\alpha^2}\right)\frac{\absz}{n\log N} + \frac{\log N}{N^4}\right)$.
\end{lemma}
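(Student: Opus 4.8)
\emph{Overall approach.} The plan is to follow the argument of \cite{WY16}, repackaged as Lemma~6 of \cite{ADOS16}, changing only that every sum ranges over $y\in S$ rather than over all of $\bX$. This localization should be essentially free: both the target $\sum_{y\in S}\bp_y\log\tfrac1{\bp_y}$ and the estimator $\hat g_S$ are sums indexed by $y\in S$, and the $y$-th summand $g_y$ of $\hat g_S$ depends only on the two multiplicities $n'_y=\bff(\xon,y)$ and $n_y=\bff(\xtn,y)$ of that single symbol; hence each per-symbol estimate from \cite{WY16,ADOS16} should transfer verbatim and only the index set changes. Concretely I would (a) quote the scalar polynomial-approximation bound, (b) run the per-symbol bias analysis and sum it over $S$, and (c) dispose of the outer truncation.

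\emph{The polynomial.} For degree $L=0.25\alpha\log n$ and the interval $[0,c_1\log N/n]$, \cite{WY16} constructs a polynomial $P_{L,g}(y)=\sum_{i=1}^{L}b_i y^i$ with $\sup_{y\in[0,c_1\log N/n]}\bigl|P_{L,g}(y)+y\log y\bigr|=\cO\!\bigl(\tfrac{c_1\log N}{n L^2}\bigr)$ and, crucially, $\max_i|b_i|\le n^\alpha/n$; since $L=\Theta(\alpha\log n)$, the constant $0.25$ (together with $c_1,c_2$) is exactly what makes the coefficient growth land at $n^\alpha/n$, and the hypothesis $n^3(16c_1/\alpha^2+1/c_2)>\log k\cdot\log n$ guarantees $L\ge1$ and keeps all of the Chernoff exponents below in the required range. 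Nothing here involves $S$, so I would import it directly, and note that the unbiased estimator of $P_{L,g}(\bp_y)$ with respect to the (independent, by sample splitting) second-half count $n_y$ is precisely $G_{L,g}(n_y)$.

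\emph{Bias bound.} Dropping the outer clipping for the moment, I would bound $\bigl|\E[\sum_{y\in S}g_y]-\sum_{y\in S}\bp_y\log\tfrac1{\bp_y}\bigr|\le\sum_{y\in S}\bigl|\E[g_y]+\bp_y\log\bp_y\bigr|$ and split $S$ into the ``small'' symbols with $\bp_y\le c_1\log N/n$ and the ``large'' ones. For a small symbol $\E[n'_y]\le c_1\log N$, so by Chernoff (this is where $c_1=2c_2$ and $c_2>35$ enter) $\Prob{n'_y\ge c_2\log N}\le N^{-\Omega(c_2)}$; off that event $g_y$ is the truncated polynomial estimator $G_{L,g}(n_y)$ (zeroed when $n_y\ge c_1\log N$), so by independence of $n'_y,n_y$ and unbiasedness of $G_{L,g}$ one gets $\E[g_y]=P_{L,g}(\bp_y)$ up to correction terms controlled by $\Prob{n'_y\ge c_2\log N}$ and $\Prob{n_y\ge c_1\log N}$, whence $\bigl|\E[g_y]+\bp_y\log\bp_y\bigr|=\cO\!\bigl(\tfrac{c_1\log N}{nL^2}\bigr)+(\text{tails})$; summing the approximation term over at most $N$ small symbols and using $\log n=\Theta(\log N)$ in the relevant range gives $\cO\!\bigl(\tfrac1{\alpha^2}\cdot\tfrac{N}{n\log N}\bigr)$. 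For a large symbol, with probability $1-N^{-\Omega(c_2)}$ one has $n'_y\ge c_2\log N$ and $g_y=g(n_y/n)+g_n$, and the first-order correction $g_n=1/(2n)$ cuts the per-symbol bias to $\cO(1/(n^2\bp_y))$ beyond a leading $\bp_y/(2n)$; since there are at most $\min(n/(c_1\log N),N)$ such symbols, each with $\bp_y>c_1\log N/n$, these sum to $\cO(N/(n\log N))$. Finally every Chernoff tail above is $N^{-\Omega(c_2)}$, and since $|g_y|\le f_{S,\max}=\cO(\log N)$ and $|S|\le N$ their total weight is $\cO(N\log N\cdot N^{-\Omega(c_2)})=\cO(\log N/N^4)$ once $c_2>35$; adding the three pieces gives the claimed bound. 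The outer $\max\{\min\{\cdot,f_{S,\max}\},0\}$ then only helps, since $\sum_{y\in S}\bp_y\log\tfrac1{\bp_y}\in[0,f_{S,\max}]$ and projecting $\sum_{y\in S}g_y$ onto this interval cannot increase the error, hence cannot increase the bias.

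\emph{Main difficulty.} I do not expect a genuinely new obstacle: the two substantive inputs --- the scalar bound $\max_i|b_i|\le n^\alpha/n$ with its $\cO(M/L^2)$ approximation error, and the bias bound for the corrected plug-in estimator --- are already in \cite{WY16,ADOS16} and are untouched by restricting attention to $S$. The only real care is in the bookkeeping of the tail terms, namely checking that all the Chernoff contributions collected above really do sum to $\cO(\log N/N^4)$; this is precisely what the hypotheses $c_1=2c_2$, $c_2>35$, and $n^3(16c_1/\alpha^2+1/c_2)>\log k\cdot\log n$ are calibrated to buy.
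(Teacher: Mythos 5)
Your overall route is the same as the paper's: import the degree-$L=0.25\alpha\log n$ approximation of $-y\log y$ with the $2^{3L}$ coefficient bound (giving $\max_i|b_i|\le n^{\alpha}/n$ after normalization), do a per-symbol bias analysis exactly as in \cite{WY16}/\cite{ADOS16} with all sums restricted to $S$, and absorb rare misclassification events into an $O(\log N/N^4)$ term. However, there is a genuine gap in your bias bookkeeping for the intermediate band of probabilities. You split $S$ at the single threshold $\bp_y\le c_1\log N/n$ and claim that every such ``small'' symbol satisfies $\Prob{n'_y\ge c_2\log N}\le N^{-\Omega(c_2)}$. This is false when $\bp_y$ lies between roughly $c_2\log N/n$ and $c_1\log N/n=2c_2\log N/n$: there $\E[n'_y]$ is at or above $c_2\log N$, so such a symbol is routed to the empirical branch $g(n_y/n)+g_n$ with probability $\Theta(1)$ (even close to $1$), not with a Chernoff-small probability. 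If you charge that event to your ``tails'' bucket at its true probability, each such symbol can contribute bias on the order of $\frac{\log N}{n}\log\frac{n}{\log N}$, and the $O(\log N/N^4)$ accounting collapses. The correct handling, which is what the paper does, is not to pretend the classification matches the true-probability split, but to show that on the empirical branch the bias-corrected plug-in is \emph{also} accurate for this band: the paper's event $E_2$ only guarantees $n'_y>c_2\log N\implies \bp_y>c_3\log N/n$ for a smaller constant $c_3$, and then quotes the Case-2 analysis of \cite{WY16} (their Equation 58) to bound the conditional bias of the empirical branch by $N/(n\log N)$ on that event; symmetrically $E_1$ guarantees that the polynomial branch is only applied where the approximation interval $[0,c_1\log N/n]$ is valid, and the zeroing branch is controlled by the \emph{joint} event bound $\Prob{E_3^c}\le n^{-4.9}$ (your framing via the two marginal tails $\Prob{n'_y\ge c_2\log N}$ and $\Prob{n_y\ge c_1\log N}$ does not work for the same reason). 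Your per-symbol estimates on the two clean regimes are otherwise consistent with the paper's quoted bounds.

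A secondary slip: the final clipping of $\sum_{y\in S}g_y$ to $[0,f_{S,\max}]$ reduces the absolute error pointwise, but that does \emph{not} imply it cannot increase the bias (a contraction can increase bias while decreasing mean absolute error). The role the clipping actually plays, in the paper's proof and in any repaired version of yours, is to keep the estimator bounded so that on the bad events (total probability $O(N^{-4})$) the error is at most $O(\log N)$, yielding the $O(\log N/N^4)$ term; the main bias bound is proved conditionally on the good events, not via a monotonicity-of-bias claim about the projection.
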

\begin{proof}
	\newcommand{\eo}{{E}_{1}}
	\newcommand{\et}{{E}_{2}}
	\newcommand{\ethr}{{E}_{3}}
	\newcommand{\eall}{{E}}
	\newcommand{\ny}{n_{y}}
	\newcommand{\nyp}{n'_{y}}
	\newcommand{\expt}[1]{\mathbb{E}\left[ #1 \right]}
	
We first upper bound the bias of our estimator. We consider three events, 
$$\eo\defeq \cap _{y \in S} \left\{ \nyp \leq  c_{2} \log N, \nyp \leq  c_{1} \log N \implies \bp_{y} \leq \frac{c_1 \log N}{n}\right\},$$ 
$$\et\defeq\cap _{y \in S} \left\{ \nyp > c_{2} \log N \implies \bp_{y} > \frac{c_3 \log N}{n}\right\},$$ $$\ethr\defeq \cap _{y \in S}\left\{ \nyp \leq  c_{2} \log N \text{ and } \nyp > c_{1} \log N \right\}~.$$ 
By proof of Lemma 6 in~\cite{ADOS16} we have,
\begin{equation}\label{prob:e3}
\Prob{\ethr^c} \leq \frac{1}{n^{4.9}}
\end{equation}
By equations 48 and 49 in~\cite{WY16} combined with \Cref{prob:e3}, we get,
$$\Prob{\eo^c} \leq \frac{2}{N^4} \text{ and } \Prob{\et^c} \leq \frac{1}{N^4} $$
Define $\eall\defeq \eo \cap \et$, then 
\begin{equation}\label{prob:eall}
\Prob{\eall^c} \leq \Prob{\eo^c}+\Prob{\et^c} \leq \frac{2}{N^4}~. 
\end{equation}

\newcommand{\io}{I_1}
\newcommand{\ito}{I_2}
Further we define random sets
$\io\defeq \left\{y \in S |  \nyp < c_{2} \log N, \ny < c_{1} \log N  \text{ and } \bp_{y} \leq \frac{c_1 \log N}{n}\right\}$ and $\ito \defeq \left\{ y \in S ~|~\nyp > c_{2} \log N \text{ and } \bp_{y} > \frac{c_3 \log N}{n}\right\}$.
We first bound the conditional bias and we later use it to bound the bias of our estimator. Our next statement follows from uniform approximation error~\cite{Tim63} and is explicitly written in to Equation 53 of~\cite{WY16}.
\begin{equation}\label{eq:2}
|\expt{\bff_{\io}(\bp)-\hat{g}_{\io} | \io}=|\sum_{y \in \io} \bp_{y} \log \frac{1}{\bp_y}-P_{L,g}(\bp_{y})| \leq \frac{N}{\alpha^2 n \log N}~.
\end{equation}
Let $\hat{\bp}$ be the empirical distribution on $\xtn$. Similarly by analysis of Case 2 and Equation 58 in~\cite{WY16} we have,
\begin{equation}\label{eq:3}
|\expt{\bff_{\ito}(\bp)-\hat{g}_{\ito} | \ito}=|\expt{\sum_{y \in \ito} (\bp_{y} \log \frac{1}{\bp_y}-\hat{\bp}_{y} \log \frac{1}{\hat{\bp}_y})|\ito}| \leq \frac{N}{n \log N}~.
\end{equation}
Combining equations \ref{eq:2}, \ref{eq:3}, \ref{prob:eall} and \ref{prob:e3} we can upper bound the bias of our estimator by $\frac{2N}{n \log N}+\frac{4 \log N}{N^4}$. Note here we use the fact that in the case of bad event ($\eall^c$ or $\ethr^c$) the bias of our estimator is upper bounded by $\log N$.

Our analysis for largest change in the value of estimator by changing one sample is exactly the same as~\cite{ADOS16} and for completeness we describe it next. The largest coefficient of the optimal uniform polynomial approximation of degree $L$ for function $x \log x$ in the interval $[0,1]$ is upper bounded by $2^{3L}$. This result follows from the proof of Lemma 2 in~\cite{CL11} and is also explicitly mentioned in the proof of Proposition 4 in~\cite{WY16}. Therefore, the largest change (after appropriately normalizing) is the largest value of $b_i$ (co-efficient of the optimal uniform polynomial approximation) which is
\[
\frac{2^{3L}e^{L^2/n}}{n}.
\]
For $L = 0.25\alpha\log n$, this is at most $\frac {n^\alpha}{n}$. 
\end{proof}

The proof of Lemma~\ref{thm:entrochange} for entropy follows from the above lemma and Lemma~\ref{lem:general_est} by substituting $n= \cO\left(\frac{N}{\log N} \frac{1}{\dst}\right)$ and $\epsilon>\Omega\left(\frac{\log N}{N^{1-\alpha}}\right)$.

\subsection{Distance to uniformity}\label{subsec:dtu}
Here we provide proof sketch for the existence of an estimator with desired properties. This result is analogous to Lemma 7 in~\cite{ADOS16} and proof for this lemma is very similar to that of \cite{ADOS16} and for completeness we sketch the proof for this result. 
\begin{lemma}

  Let $c_1>2c_2$, $c_2=35$. Then for all subset $S\subseteq \bX$, there is an estimator for distance to uniformity on subset $S$ ($\sum_{y \in S}|\bp_{y}-\frac{1}{N}|$) that changes by at most $n^{\alpha}/\nsmp$ when a sample is changed, and the bias of the estimator is at most $O(\frac{1}{\alpha}\sqrt{\frac{c_1\log N}{N\cdot n}})$.
\end{lemma}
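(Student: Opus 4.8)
The plan is to follow the proof of the entropy lemma above essentially verbatim, which itself mirrors Lemma~7 of~\cite{ADOS16} together with the distance-to-uniformity analysis of~\cite{JHW17}. Instantiate the general estimator $\hat g_S$ of \Cref{lem:general_est} with the per-symbol function $g(x)=|x-\tfrac1N|$ and the same thresholds ($c_1>2c_2$, $c_2=35$); here $G_{L,g}$ is the unbiased estimator of $P_{L,g}$, the best degree-$L$ uniform polynomial approximation of $|x-\tfrac1N|$ on the relevant frequency interval around $\tfrac1N$, with $L=0.25\alpha\log n$, and the first-order bias-correction term $g_n$ is lower order (for the piecewise-linear $g$ it can be taken $O(1/n)$, consistent with the remark in the proof of \Cref{thm:dtu} that the correction bias vanishes).

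\emph{Sensitivity.} First I would apply \Cref{lem:general_est}. Since $g$ is $1$-Lipschitz, $L_g=n\max_i|g(i/n)-g((i-1)/n)|\le 1$, so $L_g/n\le 1/n$; also $g(c_1\log n/n)\le c_1\log n/n$ and $g_n=O(1/n)$, all $\le n^{\alpha}/n$. The only term needing work is $e^{L^2/n}\max_i|b_i|$: using the classical bound that the best degree-$L$ uniform approximant of $|x|$ has monomial coefficients of magnitude $2^{O(L)}$ (the analogue of the $2^{3L}$ bound used for $x\log x$; see~\cite{JHW17}), after rescaling the approximation interval to unit scale the coefficients of $G_{L,g}$ are at most $2^{O(L)}/n$, while $e^{L^2/n}=e^{O(\alpha^2\log^2 n/n)}=1+o(1)$. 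With $L=0.25\alpha\log n$ this is at most $n^{\alpha}/n$, which bounds $9\max(\dots)$ as well.

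\emph{Bias.} Next I would bound the bias exactly as in the entropy lemma: introduce the three events controlling, respectively, that symbols with small first-half count have small probability, that symbols with large first-half count have probability bounded below, and that the two halves are consistent; each bad event has probability $1/n^{O(1)}$ by Chernoff (these are the bounds of~\cite{ADOS16,JHW17}), contributing at most $(\text{max value of }f\text{ on }S)\cdot n^{-O(1)}=O(N/n^{O(1)})$, which is lower order. On the good event, split $S$ into symbols with $n'_y<c_2\log N$ and those with $n'_y\ge c_2\log N$. For the former, $\hat g_S$ contributes the unbiased estimator of $P_{L,g}(\bp_y)$, so the per-symbol bias is $\big|\,|\bp_y-\tfrac1N|-P_{L,g}(\bp_y)\,\big|\le \|\,|x-\tfrac1N|-P_{L,g}\,\|_\infty$, the degree-$L$ uniform approximation error of an absolute value on the hard window of half-width $w\approx\sqrt{c_1\log N/(nN)}$ around $\tfrac1N$; by Bernstein's theorem this is $\Theta(w/L)$, and summing over the at most $N$ such symbols with $L=\Theta(\alpha\log N)$ gives $O\big(\tfrac1\alpha\sqrt{c_1\log N/(nN)}\big)$. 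For the latter, $\hat g_S$ uses the empirical estimate, whose bias for a piecewise-linear $g$ is $O(\sqrt{\bp_y/n})$ per symbol, and summing over symbols with $\bp_y$ bounded below by $\Theta(\log N/n)$ is dominated by the same quantity. Adding the pieces yields the claimed bias $O\big(\tfrac1\alpha\sqrt{c_1\log N/(nN)}\big)$.

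\emph{Main obstacle.} The delicate step is the first bias term. Unlike entropy, where the mild singularity of $x\log x$ buys a $1/L^2$ factor, the genuine corner of $|x|$ only gives the Bernstein rate $\Theta(1/L)$, so bringing the summed-over-domain bias down to $O(\tfrac1\alpha\sqrt{c_1\log N/(nN)})$ hinges on correctly identifying the hard window around $\tfrac1N$ (outside of which the empirical estimate is already accurate enough) and on the careful count of how many domain elements land in it — the refined bias/variance bookkeeping of~\cite{JHW17}. I would import that analysis essentially verbatim, adapting it to an arbitrary subset $S\subseteq\bX$ in the same way \Cref{lem:general_est} adapts Lemma~5 of~\cite{ADOS16}, namely by collapsing the symbols outside $S$ to a single symbol so that $\hat g_S$ is unaffected.
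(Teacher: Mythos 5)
Your route is the same as the paper's (instantiate the subset estimator of \Cref{lem:general_est} with $g(x)=|x-\tfrac1N|$, bound the sensitivity via the $2^{3L}$ coefficient bound of \cite{CL11} exactly as for entropy, and import the bias bookkeeping of \cite{JHW17}), and the sensitivity half is fine. The first genuine gap is in the definition of the estimator. The paper's proof splits into two cases according to whether $\tfrac1N$ is below or above $c_2\log N/n$; only in the first case does it reuse the count-thresholded estimator you wrote down. In the second case it thresholds on $|\tfrac{n'_y}{n}-\tfrac1N|$ against $\sqrt{c_2\log N/(Nn)}$ and runs the degree-$L$ approximation of $|x-\tfrac1N|$ on the window $[\tfrac1N-\sqrt{c_1\log N/(Nn)},\,\tfrac1N+\sqrt{c_1\log N/(Nn)}]$. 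You instantiate with the count thresholds $c_2\log N$, $c_1\log N$ but then analyze the bias as if the polynomial acted on that window around $\tfrac1N$: these are two different estimators. In the regime relevant to \Cref{thm:dtu} ($n=\Theta(N\epsilon^{-2}/\log N)$, so $\tfrac1N>c_2\log N/n$ once $\epsilon\lesssim 1/\log N$), every symbol with $\bp_y\approx\tfrac1N$ has count concentrating near $n/N\gg c_1\log N$ and is routed by your estimator to the empirical branch, so the polynomial correction never acts at the corner where it is needed. Your ``main obstacle'' paragraph shows you see the right window, but the estimator as defined does not implement it; the case split is not cosmetic.

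The second gap is the empirical branch. Bounding its per-symbol bias by $O(\sqrt{\bp_y/n})$ and asserting the sum is ``dominated by the same quantity'' is false: summed over the up-to-$N$ symbols outside the window this is of order $\sqrt{N/n}=\epsilon\sqrt{\log N}$ (and of order $1/\sqrt{\log N}$ in the small-$n$ case), which is not $O(\tfrac1\alpha\sqrt{c_1\log N/(Nn)})$ and even exceeds the target accuracy. The paper instead uses the argument from the proof of Theorem 2 in \cite{JHW17}: conditioned on the good event enforced by the first-half thresholds, $\hat{\bp}_y$ stays on the same side of $\tfrac1N$ as $\bp_y$, so $|\hat{\bp}_y-\tfrac1N|$ is (essentially) unbiased on this branch and the only bias comes from the polynomial window plus the low-probability bad events, whose contribution is controlled separately. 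A minor arithmetic point in the window term: summing $N$ per-element errors $\Theta(w/L)$ with $w=\sqrt{c_1\log N/(nN)}$ and $L=\Theta(\alpha\log N)$ gives $\tfrac1\alpha\sqrt{c_1N/(n\log N)}$, not the expression you state (this total is still $O(\sqrt{c_1}\,\epsilon/\alpha)$, but the accounting should match the claimed bound).
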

\begin{proof}
We divide estimation of distance to uniformity into two cases based on $n$. Note the proof for this lemma follows along the lines of \cite{ADOS16}.
\paragraph{Case 1: $\frac1\absz < c_2\log N/n$.} In this case, we use the estimator defined in the last section for $g(x) = |x-1/k|$.

\paragraph{Case 2: $\frac1\absz > c_2\log N/n$.} The estimator is as follows for all $y \in S$:
\[
g_{\smb}=
\begin{cases}
G_{L,g}(\Mltsmb), & \text{ for }
\absv{\frac{\Mltsmb^{'}}{n}-\frac1{\absz}}<\sqrt{\frac{c_2\log N}{Nn}}, \text{
  and } \absv{\frac{\Mltsmb}{n}-\frac1{\absz}}<\sqrt{\frac{c_1\log N}{Nn}},\\ 0,
& \text{ for }
\absv{\frac{\Mltsmb^{'}}{n}-\frac1{\absz}}<\sqrt{\frac{c_2\log N}{Nn}}, \text{
  and } \absv{\frac{\Mltsmb}{n}-\frac1{\absz}}\ge\sqrt{\frac{c_1\log N}{Nn}},\\ g
\left(\frac{\Mltsmb}{n}\right) , & \text{ for }
\absv{\frac{\Mltsmb^{'}}{n}-\frac1{\absz}}\ge\sqrt{\frac{c_2\log N}{Nn}}.
\end{cases}
\]

The estimator proposed in~\cite{ADOS16} is exactly the same as ours, but we define our estimator only for the domain elements in $S\subseteq \bX$. Note the estimator defined in~\cite{JHW17} is slightly different, assigning $G_{L,g}(\Mltsmb)$ for the first two cases. As in~\cite{ADOS16}, this second case is designed to bound the change in value of the estimator by changing one sample. Using~\cite{Tim63}(Equation 7.2.2),~\cite{JHW17} (for their estimator) show that, contribution towards bias (conditioned on "good" event \footnote{\label{event}Refer~\cite{JHW17} for definitions of "good" and "bad" events.}) by any domain element $y \in \bX$ (note we only need this result to hold for $y\in S$) satisfying $\Mltsmb^{'}<c_2\log N, \text{ and } \Mltsmb< c_1\log N$ for case 1 and $\absv{\frac{\Mltsmb^{'}}{n}-\frac1{\absz}}<\sqrt{\frac{c_2\log N}{Nn}}, \text{ and } \absv{\frac{\Mltsmb}{n}-\frac1{\absz}}<\sqrt{\frac{c_1\log N}{Nn}}$ for case 2, using polynomial approximation (Lemma 27 in~\cite{JHW17}) is upper bounded by $\cO\left(\frac1{L}\sqrt{\frac{\log N}{N\cdot n\log N}}\right)$, where $L$ is the degree of optimal uniform approximation for function $|x-\frac{1}{N}|$ in the interval $[0,2c_1 \frac{\log N}{N}]$ for case 1 (Equation (351) in~\cite{JHW17}) and $[\frac{1}{N}-\sqrt{\frac{c_1\log N}{Nn}},\frac{1}{N}-\sqrt{\frac{c_1\log N}{Nn}}]$ for case 2 (Equation (367) in~\cite{JHW17}). Further, the bias (conditioned on the "good" event) by empirical estimate for domain element $y \in \bX$ (as before, we need this result to hold only for $y\in S$) satisfying $\Mltsmb^{'}\ge c_2\log N$ for case 1 and $\absv{\frac{\Mltsmb^{'}}{n}-\frac1{\absz}}\ge\sqrt{\frac{c_2\log N}{Nn}}$ for case 2, is zero (Refer proof of Theorem 2~\cite{JHW17}).~\cite{JHW17} also bound the probability of "bad" event\footnoteref{event} (Refer proof of Lemma 2 in~\cite{JHW17}), thus bounding the bias with respect to these domain elements. Further similar to~\cite{ADOS16}, by our choice of $c_1, c_2$, the contribution to bias by domain element $y \in S$ satisfying $\Mltsmb^{'}<c_2\log k, \text{
	and } \Mltsmb\ge c_1\log N$ for case 1 and $\absv{\frac{\Mltsmb^{'}}{n}-\frac1{\absz}}<\sqrt{\frac{c_2\log N}{Nn}}, \text{
	and } \absv{\frac{\Mltsmb}{n}-\frac1{\absz}}\ge\sqrt{\frac{c_1\log N}{Nn}}$ for case 2, is upper bounded by $1/n^4<\dst^2$. Combining analysis of all these cases together, we have our result for bias.

The proof for largest change in the estimator value by changing one sample is exactly same as~\cite{ADOS16}. Similar to~\cite{ADOS16}, here we use the fact~\cite{CL11} (Lemma 2) that the largest coefficient of the optimal uniform polynomial approximation of degree $L$ for function $|x|$ in the interval $[-1,1]$ is upper bounded by $2^{3L}$. $2^{3L}$. Similar to entropy (after appropriate normalization), the largest difference in estimation will be at most $n^{\alpha}/n$.
\end{proof}

The proof of \Cref{thm:dtuchange} for distance to uniformity follows from the above lemma and \Cref{lem:general_est} and by substituting $n= \cO\left(\frac{N}{\log N} \frac{1}{\dst^2}\right)$.

\end{document}